\newtheorem{Definition}{Definition}
\newtheorem{Proposition}{Proposition}
\newtheorem{Theorem}{Theorem}
\newtheorem{Lemma}{Lemma}
\newtheorem{Aux-Lemma}{Aux-Lemma}
\newcommand{\vs}{\vspace{0.2cm}}
\newcommand{\n}{\noindent}
\newcommand{\be}{\begin{equation}}
\newcommand{\ee}{\end{equation}}
\newcommand{\ben}{\begin{equation*}}
\newcommand{\een}{\end{equation*}}
\newcommand{\dif}{{\rm d}}
\newcommand{\nor}{\mathfrak{n}}
\newcommand{\AR}{{\mathcal A}_{\mathbb{R}^{3}}}
\newcommand{\dist}{{\rm{dist}}}
\newcommand{\length}{{\rm{length}}}
\begin{document}

\thispagestyle{plain}

\begin{center}
{\sc\LARGE Stationary solutions and

\vspace{.3cm}
Asymptotic flatness II}

\vspace{.3cm}
\n {Martin Reiris}\\
\vspace{.2cm}
{\small email: martin@aei.mpg.de} \\

\vspace{.1cm}
\n \textsc{Max Planck Institute f\"ur Gravitationsphysik \\ Golm - Germany}\\

\vspace{0.6cm}
\n \begin{minipage}[l]{11cm}
\begin{spacing}{1}
{\small This is the second part of the investigation started in \cite{SSI}. We prove here that Strongly Stationary ends having cubic volume growth are Weakly Asymptotically Flat. Combined with the results in \cite{SSI} this shows that Strongly Stationary ends are Asymptotically Flat with Schwarzschidian fall off. 
}
\end{spacing}

\vs
{\sc PACS}: $02.40. -$ k, $04.20. -$ q.
\end{minipage}
\end{center}

\section{Introduction.}
In \cite{SSI} we defined Weakly Asymptotically Flat (WAF) stationary ends, a notion generalizing as much as possible the standard one of Asymptotically Flat (AF) stationary end in General Relativity, and proved that they have to be a posteriori AF with Schwarzschidian fall off. In this second part we prove that strongly stationary ends, whose definition eliminates any a priori assumption on the asymptotic, are also WAF and therefore AF with Schwarzschidian fall off. 

To the purposes of this paper a stationary data consists of a three-manifold $M$, a Riemannian metric $g$, a twist one form $\omega$ and a positive lapse function $u$ satisfying the stationary vacuum Einstein equations 
\be\label{MEE}
\left\{
\begin{array}{l}
\ Ric=2\, \nabla \ln u \otimes \nabla \ln u +\frac{\displaystyle 2}{\displaystyle u^{4}}\, {\displaystyle \omega\otimes \omega},\\
\ \Delta \ln u=-\frac{\displaystyle 2\, \vert\, \omega\, \rvert^{2}}{\displaystyle u^{4}},\vs\\
\ {\rm div}\, \omega=4\langle \nabla \ln u, \omega \rangle,\vs\\
\ {\rm d}\, \omega=0.
\end{array}
\right.
\ee
The data $(g,\omega,u)$ arise naturally from strictly stationary vacuum space-times in General Relativity when we describe them only in terms of data in the quotient three-space. We refer to \cite{SSI} for an account on how to reconstruct the stationary vacuum space-time from $(M;g,\omega,u)$ (it is worth pointing out that $g$ is not the physical quotient metric but a conformal transformation of it \cite{SSI}). The associated space-time plays no technical role in this article and we will not refer to it anymore. The physical motivations of this research can be found in \cite{SSI}.

If the manifold $M$ is diffeomorphic to $\mathbb{R}^{3}$ minus an open ball, the metric $g$ is complete and $u$ is bounded below away from zero then $(M;g,\omega,u)$ is said to be a strongly stationary end. The condition on $u$, namely that $u(p)\geq u_{0}>0$ for all $p\in M$, plays no role in this article. From now on the manifold of strong stationary ends will be denoted by $E$. 

Let $(M,g)$ be a Riemannian manifold. Suppose that $M$ is non-compact, has compact boundary and suppose too that $g$ is complete. Then $(M,g)$ is said to have cubic volume growth if
\be\label{TENIS}
\lim_{r\rightarrow \infty} \frac{{\rm Vol}({\mathcal T}_{g}(\partial E,r))}{r^{3}}=\mu>0
\ee
where ${\mathcal T}_{g}(\partial M,r)=\big\{p\in E,\dist_{g}(p,\partial M)\leq r\big\}$ is the metric-tubular neighborhood of $\partial M$ and radius $r>0$. Note by inspecting the first equation in (\ref{MEE}) that the Ricci curvature of stationary solutions is non-negative. Therefore the quotient ${\rm Vol}\big({\mathcal T}_{g}(\partial E,r)\big)/r^{3}$ is monotonically non-increasing in $r$ by the Bishop-Gromov monotonicity and the limit (\ref{TENIS}) exists. If $\mu=0$ then $(M,g)$ is said to have less than cubic volume growth. 

The purpose of this article is then to prove,
\begin{Theorem}\label{MAIN} Let $E$ be a strongly stationary end having cubic volume growth. Then $E$ is WAF and therefore AF with Schwarzschidian fall off.
\end{Theorem}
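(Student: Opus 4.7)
The plan is to first use the non-negativity of the Ricci curvature, which follows from the first equation of (\ref{MEE}), together with cubic volume growth and the Cheeger--Colding structure theory, in order to control the asymptotic geometry of $E$ up to tangent cones at infinity, and then to exploit the full stationary system to force these tangent cones to be Euclidean $\mathbb{R}^{3}$ and to upgrade the convergence to be smooth. Once a Euclidean tangent cone at infinity with smooth convergence has been established, the WAF hypotheses of \cite{SSI} can be verified and the conclusion of Theorem \ref{MAIN} will follow from the main result proved there.

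First I would fix a sequence of basepoints $p_{i}\in E$ with $r_{i}:=\dist_{g}(p_{i},\partial E)\to\infty$ and consider the rescaled pointed spaces $(E,\, r_{i}^{-2}g,\, p_{i})$. Bishop--Gromov monotonicity shows that the normalized volume ratio is non-increasing in $r$ and bounded below by $\mu>0$, while Gromov's precompactness gives a subsequence converging in pointed Gromov--Hausdorff sense to a length space $(Y_{\infty},y_{\infty})$ with non-negative generalized Ricci curvature. Because the volume ratio is asymptotically constant equal to $\mu$ along any such rescaling, the volume-cone equals metric-cone theorem of Cheeger--Colding implies that $Y_{\infty}$ is a metric cone $C(X_{\infty})$ over some two-dimensional link $X_{\infty}$.

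The next, and most delicate, step is to identify $Y_{\infty}$ with flat $\mathbb{R}^{3}$. The stationary equations (\ref{MEE}) form a semilinear elliptic system for $(g,u,\omega)$; combined with the uniform lower Ricci bound they allow one to bootstrap Cheeger--Colding compactness on balls of bounded geometry, promoting the Gromov--Hausdorff convergence to $C^{\infty}_{\mathrm{loc}}$ convergence on the regular part of $Y_{\infty}$. On this regular part the limit data $(g_{\infty},u_{\infty},\omega_{\infty})$ still satisfy (\ref{MEE}) and $g_{\infty}$ is a cone with $Ric(g_{\infty})\geq 0$. Using that $\ln u$ is superharmonic (second equation of (\ref{MEE})), combined with the radial structure of the cone and elliptic Liouville-type arguments applied to both $u_{\infty}$ and $\omega_{\infty}$, one forces $\omega_{\infty}\equiv 0$ and $u_{\infty}$ constant, so the limit is Ricci-flat on its regular part. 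Rigidity of Ricci-flat cones in dimension three then gives $Y_{\infty}=\mathbb{R}^{3}$, and in particular $\mu$ takes its maximal Euclidean value.

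Since this conclusion is independent of the sequence $(p_{i})$, the rescaled data converge uniformly and smoothly to the flat model, and the pointwise decay of $|\omega|$, of $|\nabla\ln u|$, and of the Riemann curvature of $g$ follows from the smooth convergence. These decay and asymptotic-Euclidean statements are precisely what the WAF definition of \cite{SSI} requires, so Theorem \ref{MAIN} will follow by invoking the main theorem of \cite{SSI}. The principal obstacle is the rigidity step that forces $Y_{\infty}$ to be Euclidean: a priori the link $X_{\infty}$ could be a two-dimensional Alexandrov space with non-negative curvature and sub-spherical volume, and ruling this out requires combining the elliptic control of $u$ and $\omega$ coming from (\ref{MEE}) with the cone structure in a non-trivial way, while simultaneously handling the (potentially non-empty) singular set of the Cheeger--Colding limit.
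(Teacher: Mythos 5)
Your blow-down-plus-rigidity skeleton is in fact the same as the paper's, but two of the steps you sketch do not close as stated. First, the promotion of Gromov--Hausdorff convergence to smooth convergence ``on balls of bounded geometry'' is unsupported: bounded geometry is not free, and the ingredient that supplies it (and that makes the Cheeger--Colding regular/singular decomposition unnecessary here) is Anderson's a priori decay $|Ric(p)|\leq \mathcal{K}/d^{2}(p)$, which in dimension three bounds the full curvature of $g_{r}$ on $\mathcal{A}_{r}(a,b)$; the paper then gets smooth subconvergence directly from the compactness Theorem \ref{CONVSS} for stationary solutions, with no singular set to handle. Second, the rigidity step is the heart of the matter and your route through it contains a false sub-claim: a Ricci-flat (hence flat) three-dimensional cone need not be $\mathbb{R}^{3}$ --- its link could be $\mathbb{RP}^{2}$, giving $\mathbb{R}^{3}/\{\pm 1\}$, which still has positive asymptotic volume ratio --- so topological input is required, and the ``elliptic Liouville-type arguments'' for $u_{\infty}$ and $\omega_{\infty}$ are never specified. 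The paper's actual mechanism is the limit distance function: Propositions \ref{RADON} and \ref{PARAUSO} give $|\nabla d_{\infty}|_{\infty}=1$ and $\Delta_{\infty}d_{\infty}=2/d_{\infty}$; plugging $\theta^{\tau}=2/\tau$ into the focusing equation forces $Ric_{\infty}(\nabla d_{\infty},\nabla d_{\infty})=0$ and vanishing trace-free second fundamental form pointwise; the first equation of (\ref{MEE}) then gives $\nabla d_{\infty}(u_{\infty})=0$ algebraically; integrating the equation for $\Delta_{\infty}\ln u_{\infty}$ between two level sets kills $\omega_{\infty}$ and then $\nabla u_{\infty}$; and Gauss--Codazzi yields $\kappa^{\tau}=1/\tau^{2}>0$, so the level sets are spheres, not projective planes. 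You would need to supply arguments of comparable strength to make your ``principal obstacle'' go away.

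The more serious gap, which you do not flag, is the passage from ``every tangent cone at infinity is $\mathbb{R}^{3}$'' to WAF. Conditions (W1) and (W4) of Definition \ref{WAFD} are global: a single annulus $\Omega_{m}$ must contain the \emph{entire} metric annulus $\mathcal{A}_{r_{m}}(1/2,2^{l})$ and separate $\partial E$ from infinity. A pointed blow-down at basepoints $p_{i}$ only controls a region of definite rescaled size around each $p_{i}$; a priori the metric annuli could be disconnected, or connected but of unbounded rescaled intrinsic diameter, in which case no single chart converging to a flat Euclidean annulus could contain them. Ruling this out is the content of the quantitative half of the paper: Liu's ball-covering property to bound proper diameters, the formulation of Lemma \ref{WAFCL} for arbitrary connected regions of definite volume, and the chain of overlapping annuli $\mathscr{A}_{j}$ of Proposition \ref{AUXWAF} that sweeps out the end and pins down where each metric annulus sits. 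Your closing sentence, that the asymptotic statements ``are precisely what the WAF definition requires,'' is exactly where this work is hiding.
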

\n The definition of WAF end is recalled in the next section after the necessary notation and terminology is introduced but before we pass into that we would like to make a couple of comments on the hypothesis of Theorem \ref{MAIN}. On one hand, as was indicated in \cite{SSI}, any strongly stationary end enjoys necessarily cubic volume growth due to quite general geometric facts [arXiv:1212.1317]. From this and Theorem \ref{MAIN} we deduce therefore that Strongly Stationary ends are always asymptotically flat with Schwarzschidian fall off (c.f. Corollary 1 in \cite{SSI}). 
On the other hand, stationary solutions with cubic volume growth and connected at infinity [\footnote{Recall that a non-compact manifold $M$ with compact boundary is said to be connected at infinity if for every compact set $K_{1}\subset M$ there is another compact set $K_{2}$ containing $K_{1}$ such that $M\setminus K_{2}$ is connected}] turn out to be diffeomorphic to $\mathbb{R}^{3}$ minus an open ball outside a compact set and therefore AF with Schwarzschidian fall off. This property can be proved by suitably adjusting the results of this article and will be discussed elsewhere.  


\subsection{Background material I.}\label{NOTATION}

We import here the material introduced in \cite{SSI} and that will be required for the technical discussions. We introduce too the most relevant terminology and notation. The definition of WAF end is given at the end.

\vs
{\sc Distance.} 

\vs
- The distance between two points $p$ and $q$ in a connected manifold $(M,g)$ is $\dist_{g}(p,q)=\inf\big\{\length_{g}(\mathscr{C}_{p,q}),\ \mathscr{C}_{p,q}\ \text{a}\ C^{1}\ \text{curve in}\ M\ \text{joining}\ p\ \text{to}\ q\big\}$. $(M,g)$ is said complete if $(M,\dist_{g})$ is complete as a metric space. 
The distance from a point $p$ to a set $\Omega\subset M$ will be denoted by $\dist_{g}(p,\Omega)=\big\{\dist_{g}(p,q),q\in \Omega\big\}$. More generally the distance between two sets $\Omega_{1}$ and $\Omega_{2}$ is denoted by $\dist_{g}(\Omega_{1},\Omega_{2})=\inf\big\{\dist_{g}(p,q),p\in\Omega_{1}, q\in \Omega_{2}\big\}$ [\footnote{Properly speaking this is not a metric in the subsets of $M$. In particular the distance is zero if for instance they share a point but are different sets.}].

- When one considers the metric induced by $g$ on a submanifold $N$ of a manifold $(M,g)$ it may become necessary to distinguish it from the restriction to $N$ of the metric induced by $g$ on $M$ (which do not necessarily coincide). When this is necessary we will use the notation $\dist_{(N,g)}$. For instance if $(N,g)\subset (M,g)$ then the diameter of $N$ with respect to the metric induced by $g$ on $N$  will be denoted by ${\rm diam}_{(N,g)}(N)=\sup \big\{\dist_{(N,g)}(p,q),p\ \text{and}\ q\ \text{in}\ N\big\}$ and called the proper diameter. 

- The metric induced on stationary ends $(E,g)$ will be noted by $\dist(p,q)$ and always without the subindex $g$. The distance function to the boundary $\partial E$ of stationary ends will be denoted with total exclusivity by $d(p)$ or simply $d$, that is, $d(p)=\dist(p,\partial E)=\inf\big\{\dist(p,q),q\in \partial E\big\}$.  

\vs
{\sc Scaling.} 

\vs
- Let $E$ be a strongly stationary end. Then, for any real number $r>0$ we will denote by $g_{r}$ to the scaled metric
\ben
g_{r}:=\frac{1}{r^{2}}\, g.
\een
Tensors and metric quantities constructed out of $g_{r}$ will be sub-indexed with an $r$. For instance, for the scalar curvature we have $R_{r}=R_{g_{r}}=R/r^{2}$ and for the Ricci curvature $Ric_{r}=Ric_{g_{r}}=Ric$ (although $Ric_{r}=Ric$ we will keep including the subindex $r$). Also, $d_{r}(p)=d(p)/r$. {\it This way of notating will be used extensively all through the article and is crucial keeping track of it.}

\vs\vs
{\sc Area, second fundamental form and mean curvature.} 

\vs
- The Riemannian-metric induced on compact embedded two-surfaces $S\subset E$ will be denoted by $h$ and the $h$-area of $S$ by $A(S)$. Following the notation introduced before, the metric induced in $S$ from $g_{r}$ is denoted by $h_{r}$ and the $h_{r}$-area of $S$, i.e. $A(S)/r^{2}$, is denoted by $A_{r}(S)$.   
The second fundamental form of $S$ (fixed some normal) will be denoted by $\Theta$ and the mean curvature ${\rm tr}_{h}\Theta$ by $\theta$. 

\vs
{\sc Annuli and metric annuli.} 

\vs
-  Let $E$ be a strongly stationary end. Then for any $0<a<b$ we will denote by ${\mathcal A}(a,b)$ (resp. ${\mathcal A}[a,b]$) the set
\ben
{\mathcal A}(a,b)=\big\{p\in E/a<d(p)<b\big\},\quad \text{(resp.}\ {\mathcal A}[a,b]=\big\{p\in E/a\leq d(p)\leq b\big\}\text{)} 
\een
and call it the open (resp. closed) metric annulus of radii $a$ and $b$. The notation ${\mathcal A}(a,b)$ (resp. ${\mathcal A}[a,b]$) will always refer to open (resp. closed) metric annuli defined with respect to the unscaled metric $g$ but
the subindex $r$ is included when the (open or closed) metric annuli are defined with respect to the scaled metric $g_{r}=g/r^{2}$, namely
\ben
{\mathcal A}_{r}(a,b)=\big\{p\in E/ a<d_{r}(p)<b\big\}\quad \text{and}\quad {\mathcal A}_{r}[a,b]=\big\{p\in E/ a\leq d_{r}(p)\leq b\big\}.
\een
This is consistent with the notation introduced before. Note that for all $r>0$ we have ${\mathcal A}(ar,br)={\mathcal A}_{r}(a,b)$ and ${\mathcal A}[ar,br]={\mathcal A}_{r}[a,b]$.

- Standard open annuli in $\mathbb{R}^{3}$ will be denoted by $\AR(a,b)$, namely,
\ben
\AR(a,b)=\big\{x\in\mathbb{R}^{3},a<|x|<b\big\} = B_{\mathbb{R}^{3}}(o,b)\setminus \overline{B_{\mathbb{R}^{3}}(o,a)}
\een
where for any $c>0$ $B_{\mathbb{R}^{3}}(o,c)$ is the open ball of center the origin $o=(0,0,0)$ and radius $c$ in $\mathbb{R}^{3}$.
As before, closed annulus in $\mathbb{R}^{3}$ are denoted by $\AR[a,b]=\big\{x\in \mathbb{R}^{3},a\leq |x|\leq b\big\}$.

- A manifold $\Omega$ is said to be an open (resp. closed) annulus if $\Omega$ is diffeomorphic to $\AR(1,2)$ (resp. $\AR[1,2]$). 
A metric annulus doesn't have to be necessarily an open annulus in this sense. In general, the shape of the metric annuli can be wild. 

\vs
{\sc Curvature.} 

\vs
- An essential property of the curvature of stationary solutions is M. T. Anderson's a priori curvature decay \cite{MR1806984}. It says that there is a universal constant ${\mathcal K}>0$ such that for any stationary solution $(M;g,\omega,u)$ and $p\in M$ we have $|Ric(p)|\leq {\mathcal K}/\dist^{2}(p,\partial M)$ [\footnote{There is a caveat here. The curvature estimate provided in {\bf Theorem 0.2} of \cite{MR1806984} is (as written) for the space-time metric and not for the metric $g$. However the proof of that Theorem is achieved by proving first the estimate $|Ric_{g}(p)|\leq {\mathcal K}/\dist^{2}_{g}(p,\partial M)$ (see c.f. {\bf Step I} in \cite{MR1806984}) that is all what we need here.}]. In strongly stationary ends $(E;g,\omega,u)$ this reads
\ben
|Ric(p)|\leq {\mathcal K}/\displaystyle d^{2}(p)
\een
for all $p\in E$. In particular for any $p\in {\mathcal A}_{r}(a,b)$, the Ricci curvature of the scaled metric $g_{r}$ is bounded as $|Ric_{r}(p)|_{r}\leq {\mathcal K}/a^{2}$. 

\vs
{\sc Norms and convergence of Riemannian manifolds.} 

\vs
- Given a tensor field $U$ (of any valence) on a region $\Omega$ of a manifold $(M,g)$, the $C^{i}_{g}$-norm of $U$ over $\Omega$ is defined as
\ben
\| U\|_{C^{i}_{g}(\Omega)}:=\sup_{p\in \Omega} \sum_{j=0}^{j=i} \big|\big(\nabla^{j} U\big)(p)\big|_{g}.
\een
Of course $\|U\|_{C^{i}_{g}(\Omega)}\leq \|U\|_{C^{i+1}_{g}(\Omega)}$. The subindex $g$ will be suppressed when $\Omega$ is a region of the Euclidean three-space, namely we will write $C^{i}$. 

All what we will need about convergence of smooth Riemannian manifolds will be restricted to the following definition (which is not the most general \cite{MR2243772}). Let $(\Omega_{m},g_{m})$ be a sequence of smooth, compact, connected three-manifolds with smooth boundary and let $(\Omega_{\infty},g_{\infty})$ be also smooth, compact, connected three-manifold with smooth boundary. Then, $(\Omega_{m},g_{m})$ converges to $(\Omega_{\infty},g_{\infty})$ in $C^{i}$, $i\geq 2$, if there are diffeomorphisms 
$\varphi_{m}:\Omega_{\infty}\rightarrow \Omega_{m}$ such that $\big\| \varphi_{m}^{*}\, g_{m} - g_{\infty}\big\|_{C^{i}_{g_{\infty}}(\Omega_{\infty})}\rightarrow 0$ where $\varphi^{*}_{m} g_{m}$ is the pull-back of $g_{m}$ by $\varphi_{m}$. The definition is the same if we do not require compactness on the $\Omega_{m}$ and $\Omega_{\infty}$ but assume uniformly bounded diameters. A sequence of smooth tensors $U_{m}$ converge to a smooth tensor $U_{\infty}$ in $C^{i}$, $i\geq 0$, if $\big\| \varphi_{m}^{*}\, U_{m} - U_{\infty}\big\|_{C^{i}_{g_{\infty}}(\Omega_{\infty})}\rightarrow 0$

\vs
{\sc WAF ends.} 

\vs
- The definition of WAF end is as follows. We refer the reader to \cite{SSI} for further comments about the definition.

\begin{Definition}\label{WAFD} A strongly stationary end $(E;g,\omega,u)$ is weakly asymptotically flat (WAF) if  
for every $i\geq 2$, $l\geq 1$ and divergent sequence $r_{m}\rightarrow \infty$, there is a sequence of open annuli $\Omega_{m}\subset E$ 
such that, 
\begin{enumerate}[labelindent=\parindent,leftmargin=*,label={\rm (W\arabic*)}]
\item ${\mathcal A}_{r_{m}}(1/2,2^{l})\subset \Omega_{m}$ for every $m$,

\item $(\Omega_{m},g_{r_{m}})$ converges in $C^{i}$ to  the flat annulus 
$\big({\mathcal A}_{\mathbb{R}^{3}}(1/2,2^{l}),g_{\mathbb{R}^{3}}\big)$,

\item The scaled distance functions $d_{r_{m}}$ (restricted to $\Omega_{m}$) converge in $C^{0}$ to the distance to the origin in $\mathbb{R}^{3}$ (restricted to ${\mathcal A}_{\mathbb{R}^{3}}(1/2,2^{l})).$

\item Every $\overline{\Omega}_{m}$ is a closed annulus and separates $\partial E$ from infinity, namely, $\partial E$ belongs to a bounded component of $E\setminus \overline{\Omega}_{m}$ for all $m$.

\end{enumerate}
\end{Definition}

\vs
The Figure \ref{Fig1} illustrates a WAF end along with some of the annuli $\Omega_{m}$.

\begin{figure}[h]
\centering
\includegraphics[width=8cm,height=6cm]{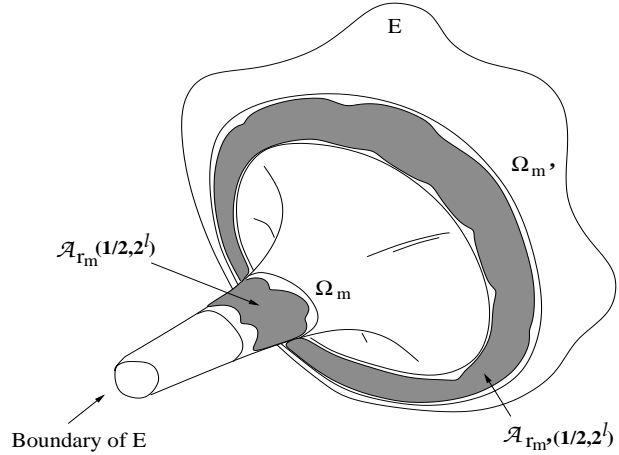}
\caption{Representation of a WAF end along with the annuli $\Omega_{m}$ and the metric annuli ${\mathcal A}_{r_{m}}(1/2,2^{l})$}
\label{Fig1}
\end{figure} 

\subsection{Background material II.}

The material contained in this section is used specifically in this article.

\vs
{\sc Regularity properties of the distance function.} 

\vs
- We summarize here quite standard properties of the distance function that have a technical relevance and will justify several operations later. The reader can consult the references for further information.

Let $M$ be a non-compact smooth manifold with compact boundary and let $g$ be a smooth complete metric with $Ric\geq 0$. Let $\dist$ be the metric induced by $g$ on $M$ and let $d$ be the distance function to $\partial M$, that is $d(p)=\dist(p,\partial M)$. 
The function $d$ is semiconcave (\cite{MR1941909}, Proposition 3.4) and therefore $\nabla d$ is locally of bounded variation (\cite{MR2041617} Theorem 2.3.1). In particular $\Delta d$ is a Radon measure and for any smooth $\phi$ of compact support in ${\rm Int}(M)$ we have $\int (\Delta d)\,\phi\, \dif V=-\int \langle \nabla \phi,\nabla d\rangle \dif V$ (note the difference in fonts between $d$ (distance) and ${\rm d}$ (differential)). By the triangle inequality the function $d$ is also 1-Lipschitz, that is $|d(p)-d(q)|\leq \dist(p,q)$.  

For every $p\in \partial M$, let $\gamma_{p}(\tau)$ be the geodesic in $M$ starting perpendicularly to $\partial M$ at $p$ (when $\tau=0$). The parameter $\tau\geq 0$ is assumed here to be the arc-length from $p$. For every $p\in \partial M$ let also $\tau_{p}=\sup\{\tau, \tau=d(\gamma_{p}(\tau))\}$. Let ${\mathcal I}$ be the subset of $\partial M\times [0,\infty)$ given by ${\mathcal I}:=\{(p,\tau),p\in \partial M\ \text{and}\ 0<\tau<\tau_{p}\}$ and consider the map
${\mathscr I}:{\mathcal I}\rightarrow M$ given by ${\mathscr I}(p,\tau)=\gamma_{p}(\tau)$. Then, the set ${\mathcal C}:=M\setminus {\mathscr I}({\mathcal I})$ is closed and of measure zero (the cut-locus) and ${\mathscr I}$ is a diffeomorphism into the image. 

As $|\nabla d|=1$ on $M\setminus {\mathcal C}$, then every the level set $\hat{\mathcal S}(\tau):=d^{-1}(\tau)\setminus {\mathcal C}$ is an embedded submanifold of $M$ of dimension two. Moreover, for almost every $\tau$ the area (${\mathcal H}^{2}$-Hausdorff measure) of $S(\tau)=d^{-1}(\tau)$ coincides with the area $A(\hat{S}(\tau))$ of $\hat{S}(\tau)$ \cite{MR2229062}. Also for almost every $\tau$ the function $\tau\rightarrow {\rm Vol}(\{p,d(p)<\tau\})$ is differentiable with $\tau$-derivative equal to $A(\hat{\mathcal S}(\tau))$. The $\tau$'s for which this holds will be called non-exceptional. 

The pull-back by ${\mathscr I}$ of the volume element in $M\setminus {\mathcal C}$ can be written as $\dif V=J \dif \tau\, \dif A_{0}$ where $\dif A_{0}$ is the area element in $\partial M$ (with the induced metric from $g$) and where $J$ is a smooth and positive function. For every $(p,\tau)\in {\mathcal I}$ we have $(\partial_{\tau} \ln J)(p,\tau)=\theta ({\mathscr I}(p,\tau))$ where $\theta({\mathscr I}(p,\tau))$ is the mean curvature of $\hat{\mathcal S}(\tau)$ at ${\mathscr I}(p,\tau)$ and in the direction of $\gamma'_{p}(\tau)$.  Also, from the focussing equation [\footnote{$\theta'=-|\Theta|^{2}-Ric(\gamma',\gamma')$.}] and the assumption $Ric\geq 0$ we have $\partial_{\tau}\theta\leq - \theta^{2}/2$. This implies easily that $\theta({\mathscr I}(p,\tau)) -2/\tau\leq 0$ for all $(p,\tau)\in {\mathcal I}$. In other words $(\partial_{\tau} \ln J -2/\tau)\leq 0$. 
From this it can be shown that the function $A(\hat{S}(\tau))/\tau^{2}$ is monotonically non-increasing in $\tau$ (although it is not necessarily continuous) \cite{MR2229062}.  

\vs
{\sc Convergence of stationary solutions} 

\vs
- The following is essentially a restatement of {\bf Lemma 1.3}  in \cite{MR1806984} with some necessary but minor modifications [\footnote{We could not validate {\bf Lemma 1.3} as it is written. I would like to thank Michael Anderson for discussions about this statement.}].

\begin{Theorem}\label{CONVSS} Let $(M_{m};g_{m},\omega_{m},u_{m})$ be a sequence of stationary solutions. 
Let $N_{m}\subset M_{m}$ be a sequence of connected open regions with compact closure and such that,
\ben
{\rm Vol}_{g_{m}}\big(N_{m}\big)\geq V_{0},\qquad {\rm {\rm diam} }_{(N_{m},g_{m})}\big(N_{m}\big)\leq D_{0},\qquad\text{and}\qquad \dist_{g_{m}}(N_{m},\partial M_{m}\big)\geq \Gamma_{0}
\een
for some $V_{0}>0$, $D_{0}<\infty$, $\Gamma_{0}>0$ and for all $m$. Then, for every $\delta<\Gamma_{0}$ there is a sequence of compact manifolds with smooth boundary $\Omega_{m}$ with $N_{m}\subset \Omega_{m}\subset {\mathcal T}_{g_{m}}(N_{m},\delta)$ such that (after scaling $\omega_{m}$ and $u_{m}$ if necessary) $(\Omega_{m};g_{m},\omega_{m},u_{m})$ has a subsequence converging in $C^{\infty}$ to a stationary solution $(\Omega_{\infty};g_{\infty},\omega_{\infty},u_{\infty})$, where $\Omega_{\infty}$ is a compact manifold with smooth boundary.
\end{Theorem}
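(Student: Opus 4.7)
The plan is to combine Anderson's a priori curvature decay with Cheeger--Gromov compactness, use the scaling symmetry $u\mapsto cu$, $\omega\mapsto c^{2}\omega$ to normalize the lapse, and then bootstrap to $C^{\infty}$ via the elliptic system (\ref{MEE}) written in harmonic coordinates. The hypothesis $\dist_{g_m}(N_m,\partial M_m)\ge \Gamma_0$ is what feeds the interior curvature bound, and the volume/diameter control is what prevents collapse.

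First I would use Anderson's estimate $|Ric_{g_m}(p)|\leq \mathcal{K}/\dist^{2}_{g_m}(p,\partial M_m)$ restricted to $\mathcal{T}_{g_m}(N_m,\delta)$, on which $\dist_{g_m}(p,\partial M_m)\geq \Gamma_0-\delta>0$, to obtain a uniform bound $|Ric_{g_m}|\leq \mathcal{K}/(\Gamma_0-\delta)^{2}$ there. Combined with $\operatorname{Vol}_{g_m}(N_m)\geq V_0$ and $\operatorname{diam}_{(N_m,g_m)}(N_m)\leq D_0$, Bishop--Gromov gives a uniform lower volume bound on balls of any fixed radius centered in $N_m$, i.e.\ uniform noncollapsing. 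By Anderson's harmonic radius estimate, the harmonic radius on $\mathcal{T}_{g_m}(N_m,\delta')$ for any $\delta'<\delta$ is uniformly bounded below, so one can extract a subsequence converging in $C^{1,\alpha}$ (as pointed Riemannian manifolds with smooth boundary, after choosing appropriate smooth thickenings $\Omega_m$) to a limit $(\Omega_\infty,g_\infty)$. The thickenings $\Omega_m$ with smooth boundary inside $\mathcal{T}_{g_m}(N_m,\delta)$ are produced as sub-level sets of a smoothed distance-to-$N_m$ function, using Sard's theorem to choose a regular value.

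Next I would treat $\omega_m$ and $u_m$. Using the scaling freedom, normalize $u_m$ at a base point $p_m\in N_m$, say $u_m(p_m)=1$, and rescale $\omega_m$ accordingly. The equation $\Delta \ln u_m=-2|\omega_m|^{2}/u_m^{4}\leq 0$ makes $\ln u_m$ superharmonic, so a Harnack inequality on a bounded domain with bounded Ricci yields a uniform upper bound for $u_m$ on a slightly smaller neighborhood. For the lower bound and for controlling $\omega_m$, one uses the first Einstein equation: $\nabla\ln u_m\otimes \nabla\ln u_m$ and $u_m^{-4}\omega_m\otimes\omega_m$ are each controlled by $Ric$, hence uniformly bounded. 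Integrating the pointwise bound on $\nabla\ln u_m$ along uniformly short geodesic paths in the normalized region gives $C^{0}$ bounds for $u_m$ from above and below, and then $|\omega_m|\leq C u_m^{2}$ gives uniform $C^{0}$ bounds for $\omega_m$.

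Finally I would bootstrap. In the harmonic coordinates provided by the harmonic radius bound, the system (\ref{MEE}) becomes a quasilinear elliptic system in $(g_m,\omega_m,\ln u_m)$, uniformly elliptic thanks to the $C^{1,\alpha}$ control of the metric and the uniform lower bound on $u_m$. Standard Schauder iteration upgrades the $C^{0}$ bounds for $(u_m,\omega_m)$ and the $C^{1,\alpha}$ bound for $g_m$ to uniform $C^{k,\alpha}$ bounds on any compact interior subregion, for every $k$. A diagonal subsequence then converges in $C^{\infty}$ on a smooth thickening $\Omega_m$ of $N_m$ with $N_m\subset \Omega_m\subset \mathcal{T}_{g_m}(N_m,\delta)$, and the limit $(\Omega_\infty;g_\infty,\omega_\infty,u_\infty)$ satisfies (\ref{MEE}) by continuity, hence is a stationary solution. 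The hardest step is the uniform two-sided control of the lapse $u_m$ after normalization: without the \emph{a priori} bound $u\geq u_0>0$ one must do this by hand from the equations, which is exactly the point at which the original \textbf{Lemma 1.3} of \cite{MR1806984} needed the minor modifications alluded to in the footnote.
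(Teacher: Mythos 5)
The paper does not actually prove this theorem: it is imported as a (slightly modified) restatement of \textbf{Lemma 1.3} of \cite{MR1806984}, the only accompanying remark being that the curvature is automatically bounded on ${\mathcal T}_{g_{m}}(N_{m},\delta)$ by Anderson's decay estimate. Your outline --- uniform Ricci bound from that estimate, non-collapsing from the volume/diameter hypotheses, harmonic-radius/$C^{1,\alpha}$ compactness, normalization of $(u_{m},\omega_{m})$ via the scaling symmetry with two-sided control of $u_{m}$ read off from the first equation of (\ref{MEE}), and elliptic bootstrap in harmonic coordinates --- is precisely the standard proof of that lemma, so the proposal is correct and takes essentially the same approach the paper relies on.
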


Above ${\mathcal T}_{g_{m}}(N_{m},\gamma)=\{p\in M_{m},\dist_{g_{m}}(p,N_{m})<\gamma\}$ is the metric-tubular neighborhood of $N_{m}$ and radius $\gamma$. Note that it is the proper diameter of $N_{m}$ the one that is uniformly bounded by $D_{0}$ [\footnote{In other words ${\rm diam} _{(N_{m},g_{m})}(N_{m})\leq D_{0}$ means that for every $\varepsilon>0$ and $p$ and $q$ in $N_{m}$ there is a $C^{1}$ curve in $N_{m}$ with $g_{m}$-length less than $D_{0}+\varepsilon$.}].
The reader may find it curious that no condition on the curvature is necessary. The reason for this is that the curvature is automatically uniformly bounded on $\Omega_{m}$ by virtue of Anderson's estimate, precisely $|Ric(p)|\leq {\mathcal K}/(\Gamma_{0}-\delta)^{2}$ for any $p\in \Omega_{m}$.

\section{Proof of Theorem \ref{MAIN}.}

The proof of Theorem \ref{MAIN} is structured as follows. In Proposition \ref{RADON} we discuss a basic and general property of the Laplacian of the distance function (to the boundary) in manifolds with non-negative Ricci curvature. This is then used in Proposition \ref{PARAUSO} to study the limit (when it exists) of scalings of the distance function. The proposition is crucial to prove the central Lemma \ref{WAFCL} which, in rough terms, shows the existence of ``almost" Euclidean annuli far away from the boundary of Strongly Stationary ends having cubic volume growth. We use this lemma in Proposition \ref{AUXWAF} to study the global geometry of ends and this paves the way to prove finally in Theorem \ref{MAINTII} that Strongly Stationary ends with cubic volume growth are WAF.   

\begin{Proposition}\label{RADON} Let $(M,g)$ be a complete smooth Riemannian manifold with $Ric\geq 0$. Suppose that $M$ is non-compact and has non-empty and compact boundary. Let $d$ be the distance function to $\partial M$, that is $d(p)=\dist(p,\partial M)$. Then,
\begin{enumerate}[labelindent=\parindent,leftmargin=*,label={\rm (\roman*)}]
\item[\rm (i)] For every smooth and non-negative function $\phi$ with support in ${\rm Int}(M)$ we have 
\ben
\int_{M} \bigg[\frac{2}{d}-\big(\Delta d\big)\bigg]\, \phi\, \dif V\geq 0.
\een
In other words the Radon measure $2/d-\Delta d$ is non-negative in ${\rm Int}(M)$.
\item[\rm (ii)] For every $0<a<b$ and divergent sequence $r_{m}\rightarrow \infty$ we have  
\ben
\lim_{r_{m}\rightarrow \infty} \int_{{\mathcal A}_{r_{m}}(a,b)}\bigg[\frac{2}{d_{r_{m}}}-\big(\Delta_{r_{m}} d_{r_{m}}\big)\bigg]\, \dif V_{r_{m}} = 0.
\een
\end{enumerate}
\end{Proposition}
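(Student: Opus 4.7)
The plan is to establish (i) as a direct consequence of the Ricci non-negativity via the focusing equation, and then to deduce (ii) from an explicit Stieltjes representation of the integral that falls out of the same ingredients.

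For (i), I would work in the $\mathscr{I}$-coordinates $(p,\tau)$ on the smooth part $M\setminus\mathcal{C}$, where $d(\mathscr{I}(p,\tau))=\tau$, $\dif V = J\,\dif\tau\,\dif A_0$, $\nabla d = \partial_\tau$, and the focusing estimate recalled in the excerpt reads $\partial_\tau \ln J \leq 2/\tau$. Given a smooth $\phi\geq 0$ compactly supported in ${\rm Int}(M)$, the distributional identity $\int (\Delta d)\phi\,\dif V = -\int \langle\nabla\phi,\nabla d\rangle\,\dif V$ reduces the claim to showing $\int[(2\phi/d)+\langle\nabla\phi,\nabla d\rangle]\,\dif V\geq 0$. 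Pushing the left-hand side into $\mathscr{I}$-coordinates and integrating by parts in $\tau$ (the boundary term at $\tau=0$ vanishes because $\phi$ vanishes near $\partial M$) turns it into
\[
\int_{\partial M}\phi(\mathscr{I}(p,\tau_p))J(p,\tau_p)\,\dif A_0(p) + \int_{\mathcal{I}}\phi\,\bigl(2J/\tau - \partial_\tau J\bigr)\,\dif\tau\,\dif A_0,
\]
and both summands are manifestly non-negative, the second because $2J/\tau-\partial_\tau J = J(2/\tau - \partial_\tau\ln J)\geq 0$.

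For (ii), I would first extract a formula for $I(a,b):=\int_{\mathcal{A}(a,b)}[2/d-\Delta d]\,\dif V$ at fixed $0<a<b$. By the coarea formula, $\int_{\mathcal{A}(a,b)}(2/d)\,\dif V = \int_a^b 2\tau\,\alpha(\tau)\,\dif\tau$, where $\alpha(\tau):=A(\hat S(\tau))/\tau^2$. Pairing $\Delta d$ with a Lipschitz cutoff $\chi_\epsilon(d)$ approximating ${\bf 1}_{[a,b]}$ and using $|\nabla d|=1$ gives $\int \Delta d\cdot\chi_\epsilon(d)\,\dif V = -\int\chi_\epsilon'(\tau)A(\hat S(\tau))\,\dif\tau$, which converges as $\epsilon\to 0$ at non-exceptional $a,b$ to $A(\hat S(b))-A(\hat S(a))$. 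Subtracting and performing a Stieltjes integration by parts yields
\[
I(a,b)=\int_a^b 2\tau\,\alpha(\tau)\,\dif\tau - b^2\alpha(b) + a^2\alpha(a) = -\int_{[a,b]}\tau^2\,\dif\alpha(\tau),
\]
where $\dif\alpha \leq 0$ thanks to the Bishop-Gromov-style monotonicity of $\alpha$ stated in the excerpt. Rescaling with $\tau=r_m s$ and bounding $\tau^2\leq (br_m)^2$ on $[ar_m, br_m]$ gives
\[
\bigl|r_m^{-2}\,I(ar_m,br_m)\bigr|\leq b^2\bigl(\alpha(ar_m^{-})-\alpha(br_m^{+})\bigr),
\]
and since $\alpha$ is non-increasing and bounded below by zero it has a finite limit $\alpha_\infty\geq 0$ at infinity, so both one-sided values tend to $\alpha_\infty$ and the bound tends to $0$. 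Writing the quantity in the statement of (ii) in terms of the unscaled metric identifies it with $r_m^{-2} I(ar_m, br_m)$, which proves the claim.

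The main technical obstacle is rigorising everything at the cut locus $\mathcal{C}$, which carries the (non-positive) singular part of $\Delta d$. In (i) the $\mathscr{I}$-coordinate computation sidesteps this cleanly by producing a non-negative boundary integral at $\tau=\tau_p$ that absorbs what would otherwise be an unsigned term; in (ii) the distributional pairing with the Lipschitz cutoff $\chi_\epsilon(d)$ automatically folds the singular part into the identity, while an auxiliary spatial exhaustion $\psi_R(d)\to 1$ by compactly supported cutoffs handles the lack of compact support of $\chi_\epsilon(d)$. Exceptional values of $a,b$ where $A(\hat S)$ fails to be continuous are inessential because the final bound is expressed through one-sided limits of the monotone function $\alpha$.
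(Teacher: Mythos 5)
Your proof is correct and follows essentially the same route as the paper: part (i) is the identical integration by parts in the $\mathscr{I}$-coordinates using $\partial_\tau\ln J\leq 2/\tau$ (your signs are in fact cleaner than the paper's intermediate display), and part (ii) rests on the same two facts, namely the identity expressing the integral through $A(\hat S(\tau))/\tau^2$ and the convergence of that monotone non-increasing function at infinity. Your Stieltjes rewriting $I(a,b)=-\int_{[a,b]}\tau^{2}\,\dif\alpha$ is only a cosmetic repackaging of the paper's observation that both terms in its formula tend to $\mu(b^{2}-a^{2})$.
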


\begin{proof}[\bf Proof.] (i) We compute
\begin{align*}
\int_{M} \big(\Delta d\big)\, \phi\, \dif V & =-\int_{M} \langle \nabla d,\nabla \phi\rangle\, \dif V = \int_{\mathcal I} (\partial_{\tau} \phi) J\, \dif\tau\, \dif A_{0} \\ 
& = \bigg[\lim_{\nu\downarrow 0} \int_{\partial M} \big(\phi J)\big|_{(p,\tau_{p}-\nu)}\, \dif A_{0}(p)\bigg] - \int_{\mathcal I} \phi\, (\partial_{\tau} J)\, \dif\tau\, \dif A_{0} 
\end{align*}
where to pass from the second to the third integral (where we are avoiding the locus) we used that the integrand $\langle \nabla d,\nabla \phi\rangle$ is in $H^{1,2}$ and that the locus has measure zero. Then,
\ben
\int_{M} \bigg[ \frac{2}{d}-\big(\Delta d\big)\bigg]\, \phi\, \dif V= \bigg[\lim_{\nu\downarrow 0} \int_{\partial M} \big(\phi J)\big|_{(p,\tau_{p}-\nu)}\, \dif A_{0}(p)\bigg] + \int_{\mathcal I} \phi\, (\frac{2}{\tau}-\frac{\partial_{\tau} J}{J})\, J\, \dif\tau\, \dif A_{0}\geq 0
\een  
because $\phi\geq 0$, $J> 0$ and $(2/\tau -\partial_{\tau} \ln J)\geq 0$. 

(ii) Let $\tau^{+}_{m}$ and $\tau^{-}_{m}$ be two divergent sequences of non-exceptional $\tau$'s, such that 
$\tau^{+}_{r_{m}}:=\tau^{+}_{m}/r_{m}\downarrow b$ and $\tau^{-}_{r_{m}}:=\tau^{+}/r_{m}\uparrow a$. Make $\tau_{r_{m}}:=\tau/r_{m}$. Then, we compute
\begin{align}\label{LENG}
\int_{\displaystyle {\mathcal A}_{r_{m}}(\tau_{r_{m}}^{-},\tau^{+}_{r_{m}})} & \bigg[\frac{2}{d_{r_{m}}}-  \big(\Delta_{r_{m}} d_{r_{m}}  \big)\bigg]\, \dif V_{r_{m}}=  \\
\nonumber & = 2\int_{\tau^{-}_{r_{m}}}^{\tau^{+}_{r_{m}}} \bigg[\frac{A_{r_{m}}(\hat{\mathcal S}(\tau_{m}))}{\tau_{r_{m}}^{2}}\bigg]\tau_{r_{m}} \dif \tau_{r_{m}} - \bigg[A_{r_{m}}(\hat{\mathcal S}(\tau^{+}_{m}))-A_{r_{m}}(\hat{\mathcal S}(\tau^{-}_{m}))\bigg]
\end{align}
where (following the notational convention) $A_{r_{m}}(\,-\,)=A(\,-\,)/r_{m}^{2}$. Now, for every $\tau$ we have $A_{r_{m}}(\hat{\mathcal S}(\tau))/\tau_{r_{m}}^{2}=A(\hat{\mathcal S}(\tau))/\tau^{2}$ and, recall, the function $A(\hat{\mathcal S}(\tau))/\tau^{2}$ is monotonically non-increasing in $\tau$. Therefore the function $A(\hat{\mathcal S}(\tau_{r_{m}}))/\tau_{r_{m}}^{2}$ as a function of $\tau_{r_{m}}$ in the interval $[\tau^{-}_{r_{m}},\tau^{-}_{r_{m}}]$ tends to a constant, say $\mu\geq 0$, over $[a,b]$. In particular $A_{r_{m}}(\hat{\mathcal S}(\tau_{m}^{+}))-A_{r_{m}}(\hat{\mathcal S}(\tau_{m}^{-}))$ tends to $\mu (b^{2}- a^{2})$ and 
\ben
2\int_{\tau^{-}_{r_{m}}}^{\tau^{+}_{r_{m}}} \bigg[\frac{A_{r_{m}}(\hat{S}(\tau_{m}))}{\tau_{r_{m}}^{2}}\bigg]\, \tau_{r_{m}}\, \dif \tau_{r_{m}}\rightarrow \mu(b^{2}-a^{2}).
\een
As a result the right hand side of (\ref{LENG}) tends to zero as wished. \end{proof}

\begin{Proposition}\label{PARAUSO} Let $E$ be a strongly stationary end and let $r_{m}\rightarrow \infty$ be a divergent sequence.
Suppose that $(\Omega_{m},g_{r_{m}})$ converges in $C^{\infty}$ to $(\Omega_{\infty},g_{\infty})$ where the $\Omega_{m}$'s and $\Omega_{\infty}$ are compact connected manifolds with smooth boundary and where $\Omega_{m}\subset {\mathcal A}_{r_{m}}(a,b)$ for each $m$. Then, there is a subsequence such that $d_{r_{m}}$ converges in $C^{0}$ to a smooth function $d_{\infty}$ satisfying 
\ben
|\nabla d_{\infty}|_{\infty}=1\qquad \text{and}\qquad \Delta_{\infty} d_{\infty} = \frac{2}{d_{\infty}}.
\een  
\end{Proposition}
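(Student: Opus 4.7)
\noindent The strategy is to extract a $C^0$ subsequential limit $d_\infty$ by Arzel\`a-Ascoli and then to apply Proposition \ref{RADON} twice: once to $d_{r_{m}}$ itself to obtain the distributional equation $\Delta_\infty d_\infty = 2/d_\infty$ (whence smoothness via elliptic bootstrap), and once to $d_{r_{m}}^{2}/2$ to pin down $|\nabla_\infty d_\infty|^{2}=1$. Let $\varphi_{m}:\Omega_\infty \to \Omega_{m}$ be the diffeomorphisms witnessing the $C^\infty$ convergence and set $f_{m}:=d_{r_{m}}\circ\varphi_{m}$. Since each $d_{r_{m}}$ is $1$-Lipschitz in $g_{r_{m}}$, takes values in $[a,b]$ on $\Omega_{m}\subset {\mathcal A}_{r_{m}}(a,b)$, and $\varphi_{m}^{*}g_{r_{m}}\to g_\infty$ in $C^\infty$, the family $\{f_{m}\}$ is uniformly bounded and equi-Lipschitz with respect to $g_\infty$. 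Arzel\`a-Ascoli gives a subsequence $f_{m}\to d_\infty$ in $C^{0}$, where $d_\infty$ is $1$-Lipschitz and $d_\infty\geq a>0$. Along a further subsequence the pull-back gradients $\varphi_{m}^{*}\nabla_{r_{m}} d_{r_{m}}$, bounded by $1$ in the metrics $\varphi_{m}^{*}g_{r_{m}}$, converge weakly-$*$ in $L^{\infty}(\Omega_\infty,g_\infty)$; $C^{0}$-convergence of $f_{m}$ identifies the weak limit as the distributional gradient $\nabla_\infty d_\infty$.

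Fix a non-negative $\phi\in C^{\infty}_{c}({\rm Int}(\Omega_\infty))$ and set $\phi_{m}:=\phi\circ\varphi_{m}^{-1}$; both $\phi_{m}$ and $\phi_{m}d_{r_{m}}$ are uniformly bounded with support in $\Omega_{m}\subset {\mathcal A}_{r_{m}}(a,b)$. Proposition \ref{RADON}(i,ii) together with the non-negativity of $2/d_{r_{m}}-\Delta_{r_{m}} d_{r_{m}}$ force
\ben
\int \phi_{m}\big[2/d_{r_{m}}-\Delta_{r_{m}} d_{r_{m}}\big]\dif V_{r_{m}}\to 0\quad\text{and}\quad \int \phi_{m}d_{r_{m}}\big[2/d_{r_{m}}-\Delta_{r_{m}} d_{r_{m}}\big]\dif V_{r_{m}}\to 0.
\een
Using $\int \phi_{m}\Delta_{r_{m}} d_{r_{m}}\dif V_{r_{m}}=-\int\langle \nabla_{r_{m}}\phi_{m},\nabla_{r_{m}} d_{r_{m}}\rangle\dif V_{r_{m}}$ and the weak-strong pairing of the smoothly convergent $\nabla_{r_{m}}\phi_{m}$ with the weakly-$*$ convergent $\nabla_{r_{m}} d_{r_{m}}$, the first vanishing yields $-\int\langle \nabla_\infty\phi,\nabla_\infty d_\infty\rangle\dif V_\infty=\int\phi(2/d_\infty)\dif V_\infty$, i.e., $\Delta_\infty d_\infty=2/d_\infty$ in the distributional sense. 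Since $2/d_\infty$ is smooth, a standard elliptic bootstrap promotes $d_\infty$ to a smooth function on $\Omega_\infty$.

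For the gradient-norm identity, integration by parts combined with $|\nabla d_{r_{m}}|^{2}=1$ almost everywhere (off the cut-locus) gives
\ben
\int \phi_{m}\Delta_{r_{m}}(d_{r_{m}}^{2}/2)\dif V_{r_{m}}\ =\ \int (\phi_{m}d_{r_{m}})\Delta_{r_{m}} d_{r_{m}}\dif V_{r_{m}}\ +\ \int \phi_{m}\dif V_{r_{m}}.
\een
The second vanishing above turns the first right-hand term into $2\int\phi\dif V_\infty$, so the right side tends to $3\int\phi\dif V_\infty$. The left side equals $-\int d_{r_{m}}\langle \nabla_{r_{m}}\phi_{m},\nabla_{r_{m}} d_{r_{m}}\rangle\dif V_{r_{m}}$, which by weak-strong pairing (now using the strongly convergent $d_{r_{m}}\nabla_{r_{m}}\phi_{m}$) tends to $-\int d_\infty\langle\nabla_\infty\phi,\nabla_\infty d_\infty\rangle\dif V_\infty=\int\phi\,\Delta_\infty(d_\infty^{2}/2)\dif V_\infty$. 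Hence $\Delta_\infty(d_\infty^{2}/2)=3$, and expanding pointwise $\Delta_\infty(d_\infty^{2}/2)=d_\infty\Delta_\infty d_\infty+|\nabla_\infty d_\infty|^{2}=2+|\nabla_\infty d_\infty|^{2}$ yields $|\nabla_\infty d_\infty|^{2}=1$. The main technical obstacle is handling the measure-valued $\Delta_{r_{m}} d_{r_{m}}$ under weak-$*$ limits through the varying metrics $\varphi_{m}^{*}g_{r_{m}}$; the $d^{2}/2$ device is the right trick because its Laplacian probes the combination $d_{r_{m}}\nabla_{r_{m}} d_{r_{m}}$, which is amenable to weak-strong pairing, and so recovers the pointwise eikonal-type identity that would otherwise be lost in passing to weak limits.
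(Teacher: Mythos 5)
Your proof is correct, and while the first half follows the paper's strategy, the second half takes a genuinely different route. For the weak equation $\Delta_\infty d_\infty = 2/d_\infty$ you and the paper both combine Proposition \ref{RADON}(i) (non-negativity of the measure $2/d-\Delta d$) with \ref{RADON}(ii) (vanishing of its total mass on the scaled annuli); the only difference is bookkeeping: the paper integrates by parts twice so that all derivatives land on $\phi$ and no limit of gradients is ever needed, whereas you pass to a weak-$*$ limit of $\varphi_m^*\nabla_{r_m}d_{r_m}$ and pair it against the strongly convergent $\nabla\phi_m$ — this works, but it forces you to justify the identification of the weak-$*$ limit with $\nabla_\infty d_\infty$ and, later, the integration-by-parts formula $\int\langle\nabla(\phi_m d_{r_m}),\nabla d_{r_m}\rangle\,\dif V_{r_m}=-\int(\phi_m d_{r_m})\,\dif(\Delta_{r_m}d_{r_m})$ for the merely Lipschitz test function $\phi_m d_{r_m}$ against the Radon measure $\Delta_{r_m}d_{r_m}$ (true by mollification, but worth stating since the measure has a singular part on the cut locus). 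The real divergence is in the eikonal identity: the paper proves $|\nabla d_\infty|_\infty=1$ geometrically, by following the minimizing geodesics from $p_m=\varphi_m(p)$ to $\partial E$, extracting limit points $q^\varepsilon$ with $d_\infty(p)-d_\infty(q^\varepsilon)=\dist_{(\Omega_\infty,g_\infty)}(p,q^\varepsilon)=\varepsilon$, which saturates the Lipschitz bound at every interior point. Your $d^2/2$ device instead exploits that $\Delta(d^2/2)=d\,\Delta d+|\nabla d|^2$ injects the exact pointwise value $|\nabla d_{r_m}|^2=1$ (a.e., off the cut locus) into a quantity, $d_{r_m}\nabla_{r_m}d_{r_m}$, that survives weak-strong pairing; this neatly circumvents the usual loss of norm under weak limits and avoids any geodesic construction. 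Both arguments are sound: the paper's is softer and uses only the metric structure, yours is more PDE-theoretic but leans harder on BV calculus for the semiconcave function $d$.
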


\begin{proof}[\bf Proof.] Denote by $\varphi_{m}:\Omega_{\infty}\rightarrow \Omega_{m}$ the diffeomorphisms realizing the $C^{\infty}$ convergence $(\Omega_{m},g_{r_{m}})\rightarrow (\Omega_{\infty},g_{\infty})$. Also a few times below we make reference to the 
metrics induced by $g_{r_{m}}$ on $\Omega_{m}$ and that, as we said in the introduction, will be denoted by $\dist_{(\Omega_{m},g_{r_{m}})}$. Note again that this is not the same than the distance induced by $g_{r_{m}}$ on $E$ and restricted to $\Omega_{m}$ and that we denote by $\dist_{r_m}$.

As $(\Omega_{m},g_{r_{m}})\overset{C^{\infty}}\rightarrow (\Omega_{\infty},g_{\infty})$, then the pull back of the metric functions $\dist_{(\Omega_{m},g_{r_{m}})}$, namely $\varphi^{*}_{m} \dist_{(\Omega_{m},g_{r_{m}})}=\dist_{(\Omega_{m},g_{r_{m}})}(\varphi_{m},\varphi_{m}):\Omega_{\infty}\times\Omega_{\infty}\rightarrow [0,\infty)$,
converge in $C^{0}$ to the metric function $d_{(\Omega_{\infty},g_{\infty})}:\Omega_{\infty}\times\Omega_{\infty}\rightarrow [0,\infty)$ induced by $g_{\infty}$ on $\Omega_{\infty}$. Therefore there is $m_{0}$ such that for any $m\geq m_{0}$ and $p,q$ in $\Omega_{\infty}$ we have $\dist_{(\Omega_{m},g_{r_{m}})}(\varphi_{m}(p),\varphi_{m}(q))\leq 2\dist_{(\Omega_{\infty},g_{\infty})}(p,q)$.
Now, for $m\geq m_{0}$ we have
\begin{align}\label{EQQ}
|d_{r_{m}} (\varphi_{m}(p))- d_{r_{m}}(\varphi_{m}(q))| & \leq \dist_{r_{m}}(\varphi_{m}(p),\varphi_{m}(q)) \\ \nonumber & \leq \dist_{(\Omega_{m},g_{r_{m}})}(\varphi_{m}(p),\varphi_{m}(q)) \leq 2\dist_{(\Omega_{\infty},g_{\infty})}(p,q)
\end{align} 
where the first inequality is just the triangle inequality.
Moreover, for all $m$ we have $|d_{r_{m}}\circ \varphi_{m}|\leq b$ because $\Omega_{m}\subset {\mathcal A}_{r_{m}}(a,b)$. This shows that the sequence of functions $\{d_{r_{m}}\circ \varphi_{m}\}_{m\geq m_{0}}$, as functions in the compact metric space $(\Omega_{\infty},\dist_{(\Omega_{\infty},g_{\infty})})$ are uniformly bounded and 2-Lipschitz (and therefore equicontinuous). By Ascoli-Arzel\` a there is a subsequence converging in $C^{0}$ to a Lipschitz function that we will denote by $d_{\infty}$. The limit function $d_{\infty}$ is indeed $1$-Lipschitz, that is $|d_{\infty}(p)-d_{\infty}(q)|\leq \dist_{(\Omega_{\infty},g_{\infty})}(p,q)$, as can be seen by taking the limit in the first and third terms of (\ref{EQQ}). During the rest of the proof we will work with such subsequence (indexed by $m$ again) and the limit function $d_{\infty}$.

We claim that for any smooth function $\phi$ of compact support in ${\rm Int}(\Omega_{\infty})$ we have
\be\label{WCOND}
\int_{\Omega_{\infty}} \bigg[\big(\Delta_{\infty} \phi\big) d_{\infty} -\big(\frac{2}{d_{\infty}}\big)\phi \bigg]\, \dif V_{\infty} = 0.
\ee
By proving the claim one would be showing that $f=d_{\infty}$ is a weak solution of $\Delta_{\infty} f= 2/d_{\infty}$ \cite{MR737190}, where we think here the right hand side as a given Lipschitz function. From the regularity of weak solutions \cite{MR737190} $f$ would then be in $H^{2,2}$. But if a positive function $f$ is in $H^{2,2}$ and satisfies $\Delta_{\infty} f=2/f$ then $f$ is smooth by a standard bootstrap of regularity. The smoothness of $d_{\infty}$ would thus follow from proving the claim. 

To see (\ref{WCOND}) for every $\phi$ we proceed as follows. First observe that it is enough to prove (\ref{WCOND}) for any $\phi\geq 0$ of compact support in ${\rm Int}(\Omega_{\infty})$ because any $\phi$ of compact support can be written as $\phi=\phi^{+}_{1}-\phi^{+}_{2}$ with $\phi^{+}_{1}\geq 0$ and $\phi^{+}_{2}\geq 0$ and of compact support [\footnote{To see this chose any non-negative function $\tilde{\phi}$ of compact support that takes the value $\sup\{|\phi|\}$ all over the support of $\phi$. Then if we let $\phi^{+}_{1}=\tilde{\phi}$ and $\phi^{+}_{2}=\tilde{\phi}-\phi$, then $\phi^{+}_{1}$ and $\phi^{+}_{2}$ are non-negative, have compact support and their subtraction is $\phi$.}].  
Assume then that $\phi\geq 0$. In $\Omega_{m}$ define the function $\phi_{m}:=\phi\circ \varphi_{m}^{-1}$ and let $\overline{\phi}=\max\{\phi\}=\max\{\phi_{m}\}$. Then,
\begin{align*}
\int_{\Omega_{\infty}} & \bigg[\big(\frac{2}{d_{\infty}}\big)\phi-\big(\Delta_{\infty} \phi\big) d_{\infty} \bigg]\, \dif V_{\infty} = \lim_{m} \int_{\Omega_{m}}  
\bigg[\big(\frac{2}{d_{r_{m}}}\big)\phi_{m} - \big(\Delta_{r_{m}} \phi_{m}\big) d_{r_{m}}\bigg]\, \dif V_{r_{m}}\\
& =\lim_{m} \int_{\Omega_{m}}  
\bigg[\frac{2}{d_{r_{m}}} - \big(\Delta_{r_{m}} d_{r_{m}}\big) \bigg]\, \phi_{m}\, \dif V_{r_{m}}\leq \overline{\phi} \, \lim_{m} \int_{\Omega_{m}} \bigg[\frac{2}{d_{r_{m}}} - \big(\Delta_{r_{m}} d_{r_{m}}\big) \bigg]\, \dif V_{r_{m}} \\
& \leq \overline{\phi} \, \lim_{m} \int_{{\mathcal A}_{r_{m}}(a,b)} \bigg[\frac{2}{d_{r_{m}}} - \big(\Delta_{r_{m}} d_{r_{m}}\big) \bigg]\, \dif V_{r_{m}} = 0
\end{align*} 
where to pass from the third to the fourth term and also from the fourth to the fifth we used (i) in Proposition \ref{RADON} and where to obtain the last equality we used (ii) in the same Proposition. To conclude that the first integral is indeed zero (and not negative), observe that it is equal to the third term which is non-negative by (i) in Proposition \ref{RADON}.

It remains to prove that $|\nabla d_{\infty}|_{\infty}=1$. Indeed, as $d_{\infty}$ is $1$-Lipschitz we have at least $|\nabla d_{\infty}|_{\infty}\leq 1$. To show that the norm is indeed one it is enough to prove that: {\it For any $p\in {\rm Int}(\Omega_{\infty})$ there is $\varepsilon_{p}$ such that for any $\varepsilon<\varepsilon_{p}$ there is $q^{\varepsilon}$ such that}
\ben
d_{\infty}(p)-d_{\infty}(q^{\varepsilon})=\dist_{(\Omega_{\infty},g_{\infty})}(p,q^{\varepsilon})=\varepsilon.
\een  
Let us see this now. 
Let $p_{m}=\varphi_{m}(p)$ and for every $m$ let $\gamma_{p_{m}}(\tau)$ be a geodesic joining $p_{m}$ to $\partial E$ such that $\tau=d_{r_{m}}(\gamma_{p_{m}}(\tau))$ for all $0\leq \tau\leq d_{r_{m}}(p_{m})$. Such geodesic must minimize the distance between any two of its points. Therefore, if for any $\varepsilon<d_{r_{m}}(p_{m})$ we let $q^{\varepsilon}_{m}:=\gamma_{p_{m}}(d_{r_{m}}(p_{m})-\varepsilon)$ then we have $d_{r_{m}}(p_{m})-d_{r_{m}}(q^{\varepsilon}_{m})=\dist_{rm}(p_{m},q^{\varepsilon}_{m})=\varepsilon$. 
Now, if $\varepsilon\leq \varepsilon_{p}=\dist_{(\Omega_{\infty},g_{\infty})}(p,\partial \Omega_{\infty})/2$ then there is $m_{\varepsilon}$ such that for any $m\geq m_{\varepsilon}$ we have $q_{m}^{\varepsilon}\in \Omega_{m}$ and $\dist_{r_{m}}(p_{m},q_{m}^{\varepsilon})=d_{(\Omega_{m},g_{r_{m}})}(p_{m},q_{m}^{\varepsilon})$. Therefore, one can take a subsequence of $\{\varphi_{m}^{-1}(q_{m}^{\varepsilon})\}_{m\geq m_{\varepsilon}}$ (indexed again by $m$) and converging to a $q^{\varepsilon}$ satisfying 
\begin{align*}
d_{\infty}(p)-d_{\infty}(q^{\varepsilon}) & =\lim\big(d_{r_{m}}(p_{m})-d_{r_{m}}(q^{\varepsilon}_{m})\big)=\lim \dist_{rm}(p_{m},q^{\varepsilon}_{m})\\
& = \lim \dist_{(\Omega_{m},g_{r_{m}})}(p_{m},q_{m}^{\varepsilon})= \dist_{(\Omega_{\infty},g_{\infty})}(p,q^{\varepsilon})=\varepsilon
\end{align*}
as wished. \end{proof}
\begin{Lemma}\label{WAFCL} Let $E$ be a strong stationary end having cubic volume growth. Then, for every $V>0$, $\varepsilon>0$, integer $i\geq 2$ and $b>a>0$ there is $r_{0}=r_{0}(V,\varepsilon, a, b,i)>0$ such that for every $r\geq r_{0}$ and every open and connected region ${\mathcal U}$ with
\be\label{CONDFI}
{\mathcal U}\subset {\mathcal A}_{r}(a,b)\qquad \text{and}\qquad {\rm Vol}_{r}({\mathcal U})\geq V,
\ee
there exists a closed annulus ${\mathcal W}$ with ${\mathcal A}_{r}(a/2,2b)\supset {\mathcal W}\supset {\mathcal U}$ and a diffeomorphism $\varphi:\AR[2a/3,3b/2]\rightarrow {\mathcal W}$ satisfying simultaneously 
\begin{enumerate}[labelindent=\parindent,leftmargin=*,label={\rm (P\arabic*)}]
\item $\varphi\big(\partial B_{\mathbb{R}^{3}}(o,2a/3)\big) \subset {\mathcal A}_{r}(a/2,a)$ and $\varphi\big(\partial B_{\mathbb{R}^{3}}(o,3b/2)\big) \subset {\mathcal A}_{r}(b,2b)$,
\item $\varphi^{*} (g_{r})$ is $\varepsilon$-close in the $C^{i}$-norm to the Euclidean metric in $\AR[2a/3,3b/2]$,
\item $d_{r}\circ \varphi$ is $\varepsilon$-close in the $C^{0}$-norm to the distance function to the origin in $\mathbb{R}^{3}$ restricted to $\AR[2a/3,3b/2]$. 
\end{enumerate}
\end{Lemma}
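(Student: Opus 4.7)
The approach is a proof by contradiction combined with a compactness--rigidity argument. Suppose the conclusion fails: there are constants $V,\varepsilon,a,b,i$, a divergent sequence $r_m\to\infty$, and connected open sets $\mathcal U_m\subset\mathcal A_{r_m}(a,b)$ with $\mathrm{Vol}_{r_m}(\mathcal U_m)\geq V$ for which no admissible pair $(\mathcal W_m,\varphi_m)$ exists. Let $\tilde{\mathcal U}_m$ denote the connected component of $\mathcal A_{r_m}(a/2,2b)$ containing $\mathcal U_m$. Anderson's decay bounds $|Ric_{r_m}|_{r_m}\leq 4\mathcal K/a^2$ on a slight enlargement $N_m\subset\mathcal A_{r_m}(a/2-\delta,2b+\delta)$ of $\tilde{\mathcal U}_m$, while cubic volume growth bounds $\mathrm{Vol}_{r_m}(N_m)$ uniformly. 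Combined with the noncollapsing provided by the harmonic-radius estimate for stationary solutions and the lower bound $\mathrm{Vol}_{r_m}(\mathcal U_m)\geq V$, this also yields a uniform bound on the proper diameter of $N_m$. Theorem \ref{CONVSS} then produces, along a subsequence, diffeomorphisms $\psi_m:\Omega_\infty\to N_m$ realizing $C^\infty$ convergence of $(N_m;g_{r_m},\omega_{r_m},u_{r_m})$ to a smooth stationary solution $(\Omega_\infty;g_\infty,\omega_\infty,u_\infty)$.

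Proposition \ref{PARAUSO}, applied along a further subsequence, produces a smooth limit $d_\infty$ of the pulled-back $d_{r_m}$ satisfying $|\nabla d_\infty|_\infty=1$ and $\Delta_\infty d_\infty=2/d_\infty$. Integral curves of $\nabla d_\infty$ are therefore unit-speed geodesics and each level set $S_t=d_\infty^{-1}(t)$ has mean curvature $\theta=2/t$. Since $|\Theta|^2\geq\theta^2/2$ and $Ric_\infty\geq 0$, the focussing equation $\theta'=-|\Theta|^2-Ric_\infty(\gamma',\gamma')$ forces equality throughout: $\Theta=(1/t)g_t$ and $Ric_\infty(\nabla d_\infty,\nabla d_\infty)=0$. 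The stationary Ricci formula (\ref{MEE}) then gives $d\ln u_\infty(\nabla d_\infty)=0$ and $\omega_\infty(\nabla d_\infty)=0$. By Gauss's lemma and umbilicity of the level sets the limit metric takes the warped-product form $g_\infty=dt^2+t^2 h_0$, with $h_0$ a fixed metric on a closed 2-surface $\Sigma$, and with $u_\infty,\omega_\infty$ pulled back from $\Sigma$.

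Substituting this Ansatz into the stationary vacuum system reduces it to equations on $(\Sigma,h_0)$. Since $\Sigma$ is closed, integrating the reduced equation $\Delta_{h_0}\ln u_\infty=-2|\omega_\infty|_{h_0}^2/u_\infty^4$ over $\Sigma$ forces $\omega_\infty\equiv 0$, and then $u_\infty$ is harmonic on $\Sigma$, hence constant. The stationary Ricci equation then reduces to $Ric_\infty=0$ on the cone, which forces Gauss curvature $K\equiv 1$ on $\Sigma$; by Gauss--Bonnet $\Sigma\cong S^2$ of area $4\pi$. Hence $(\Omega_\infty,g_\infty)$ is isometric to an open subset of flat $\mathbb{R}^3$ containing $\mathcal A_{\mathbb{R}^3}(a/2,2b)$, with $d_\infty$ the distance to the origin.

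Composing this isometry with $\psi_m$ and restricting to $\mathcal A_{\mathbb{R}^3}[2a/3,3b/2]$ defines, for all sufficiently large $m$, a diffeomorphism $\varphi_m$ onto a closed annulus $\mathcal W_m\subset\mathcal A_{r_m}(a/2,2b)$ containing $\tilde{\mathcal U}_m\supset\mathcal U_m$. Properties (P1)--(P3) follow immediately from the $C^\infty$ convergence $\psi_m^*g_{r_m}\to g_\infty$ and the $C^0$ convergence of the pulled-back distance functions, contradicting the failure assumption. The hard part is the flatness conclusion in the third paragraph: Proposition \ref{PARAUSO} alone yields only a conical warped-product limit, and it is the stationary vacuum equations on the compact cross-section $\Sigma$ together with Gauss--Bonnet that kill $\omega_\infty$, freeze $u_\infty$, and force $\Sigma\cong S^2$ of area $4\pi$---that is, the Euclidean cone. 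A secondary technical point is the proper-diameter bound required to apply Theorem \ref{CONVSS}, which rests on noncollapsing estimates for stationary solutions.
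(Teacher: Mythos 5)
Your overall strategy is the paper's own: argue by contradiction, extract a limit via Theorem \ref{CONVSS}, obtain the smooth limit distance function with $|\nabla d_{\infty}|_{\infty}=1$ and $\Delta_{\infty}d_{\infty}=2/d_{\infty}$ from Proposition \ref{PARAUSO}, and then run the rigidity argument (the focussing equation forces umbilicity and $Ric_{\infty}(\nabla d_{\infty},\nabla d_{\infty})=0$; the stationary system then kills $\omega_{\infty}$, makes $u_{\infty}$ constant and forces $Ric_{\infty}=0$, so the level sets are round spheres and the swept region is a Euclidean annulus, which is transplanted back to contradict the assumption). The one genuine gap is the proper-diameter bound needed to invoke Theorem \ref{CONVSS}. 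You derive it from ``the noncollapsing provided by the harmonic-radius estimate for stationary solutions'' together with a volume upper bound. No such unconditional noncollapsing is available: the hypothesis ${\rm Vol}_{r}({\mathcal U})\geq V$ gives a volume lower bound only on ${\mathcal U}_{m}$ itself, and Anderson's estimate controls curvature but not injectivity radius, so the connected component of ${\mathcal A}_{r_{m}}(a/2,2b)$ containing ${\mathcal U}_{m}$ could a priori carry collapsed pieces of unbounded proper diameter. Your implication ``volume upper bound $+$ noncollapsing $\Rightarrow$ proper diameter bound'' is fine, but the noncollapsing input is precisely what is missing (ruling out collapse at infinity is the whole point of the cubic-volume-growth hypothesis, so it cannot be assumed locally for free). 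The paper gets the diameter bound instead from Liu's ball-covering property ({\it Remark 2} of \cite{MR1068127}): under $Ric\geq 0$ the annulus ${\mathcal A}_{r_{m}}[a/4,4b]$ is covered by at most ${\mathfrak N}$ geodesic balls of $g_{r_{m}}$-radius $a/16$ with ${\mathfrak N}$ independent of $m$ and with no noncollapsing hypothesis; the connected component of the union of these balls containing ${\mathcal U}_{m}$ then has proper diameter at most $a{\mathfrak N}/8$. Your argument needs this (or an equivalent covering statement) in place of the noncollapsing claim.

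Two smaller omissions. Before flowing the level sets you must know that $d_{\infty}^{-1}(\tau)$ is compactly contained in ${\rm Int}(\Omega_{\infty})$ for the relevant range of $\tau$; the paper establishes this by showing that every $p\in\partial\Omega_{m}$ has $d_{r_{m}}(p)\leq 13a/32$ or $d_{r_{m}}(p)\geq 3b$. Without this boundary control the cone structure is only obtained on the $\tau$-interval actually swept, so you cannot assert that the flat region contains all of $\AR(a/2,2b)$; it suffices, as in the paper, to get it on an interval such as $[3a/7,7b/3]\supset[2a/3,3b/2]$. Finally, $\kappa\equiv 1/\tau^{2}$ on a closed level surface yields $S^{2}$ or $\mathbb{R}P^{2}$ by Gauss--Bonnet; orientability of the level sets (they are two-sided in the orientable limit of subsets of $E\cong\mathbb{R}^{3}\setminus B$) is what excludes the projective plane.
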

\begin{proof}[\bf Proof.] The proof proceeds by contradiction. Assume then that there exists $V>0$, $\varepsilon>0$,  $i\geq 2$, $0< a<b$, a divergent sequence $r_{m}\rightarrow \infty$ and a sequence of connected open regions ${\mathcal U}_{m}$ satisfying
\be\label{UCONDI}
{\mathcal U}_{m}\subset {\mathcal A}_{r_{m}}(a,b)\qquad\text{and}\qquad {\rm Vol}_{r_{m}}({\mathcal U}_{m})\geq V
\ee
for ever $m$, but such that (also for each $m$) there does not exist a closed annulus ${\mathcal W}_{m}$, with ${\mathcal U}_{m}\subset {\mathcal W}_{m}\subset {\mathcal A}_{r_{m}}(a/2,2b)$, together with a diffeomorphism $\varphi_{m}:\AR[2a/3,3b/2]\rightarrow {\mathcal W}_{m}$ satisfying simultaneously,
\begin{enumerate}[labelindent=\parindent,leftmargin=*,label={\rm (P\arabic*')}]
\item $\varphi_{m}(\partial B_{\mathbb{R}^{3}}(o,2a/3))\subset {\mathcal A}_{r}(a/2,a)$, $\varphi_{m}(\partial B_{\mathbb{R}^{3}}(o,3b/2))\subset {\mathcal A}_{r}(b,2b)$, 
\item $\varphi_{m}^{*} g_{r_{m}}$ is $\varepsilon$-close in the $C^{2}$-norm to the Euclidean metric, 
\item $d_{r_{m}}\circ \varphi_{m}$ is $\varepsilon$-close in the $C^{0}$-norm to the distance function to the origin in $\mathbb{R}^{3}$. 
\end{enumerate}

\n We will see in what follows that for sufficiently large $m$ a region ${\mathcal W}_{m}$ with ${\mathcal U}_{m}\subset {\mathcal W}_{m}\subset {\mathcal A}_{r_{m}}(a/2,2b)$ and a diffeomorphism $\varphi_{m}$ can indeed be found satisfying (P1')-(P3'). In this way a contradiction will be reached. 

First, by Liu's Ball-Covering-Property (c.f. {\it Remark 2} \cite{MR1068127} with $S {\rm (there)} = {\mathcal A}_{r_{m}}[a/4,4b]$ {\rm (here)} and $
\mu{\rm (there)} = a/16 {\rm (here)}$ [\footnote{There is a caveat in this point. To define (here) the analogous to the point $p_{0}$ (there) from which distances are measured proceed as follows. ``Fill in" smoothly $E$ by gluing a three-ball $B_{\mathbb{R}^{3}}(o,1)$ and provide the ball with a Riemannian metric $g$ in such a way every point $x\in \partial B_{\mathbb{R}^{3}}(o,1)$ is at a $g$-distance one from the origin $o$ (this can always be done). Then for any $p\in E$ we have $d(p)=\dist_{(E\cup_{\sim} B_{\mathbb{R}^{3}}(o,1)),g)}(p,o)-1$. In this setup, when using {\it Remark 2}, make it with $M{\rm (there)}=E\cup_{\sim} B_{\mathbb{R}^{3}}(o,1){\rm (here)}$ and $p_{0}{\rm (there)}=o{\rm (here)}$.}]), there is an integer ${\mathfrak N}>0$ such that for each $m$ there are geodesic balls $B_{g_{r_{m}}}(p_{m,j},a/16)$, $j=1,\ldots,j_{m}\leq {\mathfrak N}$, each of which intersects ${\mathcal A}_{r_{m}}[a/4,4b]$ and the union of which covers ${\mathcal A}_{r_{m}}[a/4,4b]$. What is crucial here is that the bound ${\mathfrak N}$ for the number of balls is independent of $m$. 
For each $m$ let ${\mathcal B}_{m}$ be the connected component of the union $\cup_{j=1}^{j=j_{m}} B_{g_{r_{m}}}(p_{m,j},a/16)$ containing the connected set ${\mathcal U}_{m}$. We claim that for each $m$ we have
\begin{enumerate}[labelindent=\parindent,leftmargin=*,label={\rm (Q\arabic*)}]
\item [\rm (a)] ${\rm Vol}_{r_{m}}({\mathcal B}_{m})\geq V$, and
\item [\rm (b)] ${\rm diam} _{({\mathcal B}_{m},g_{r_{m}})}({\mathcal B}_{m})\leq a{\mathfrak N}/8$, and
\item [\rm (c)] $\dist_{r_{m}}\big({\mathcal B}_{m},\partial {\mathcal A}_{r_{m}}(a/16,16b)\big)\geq a/16$.
\end{enumerate}
\n Indeed, (a) follows from (\ref{UCONDI}) and from the inclusion ${\mathcal U}_{m}\subset {\mathcal B}_{m}$; (b) follows from the general geometric fact that every connected set which is the union of $N$ geodesic balls of radii $D$ has a proper diameter of at most $2DN$; 
(c) To show this we note first that
${\mathcal B}_{m}\subset {\mathcal A}_{r}(a/8,8b)$. Indeed, if $p\in {\mathcal B}_{m}$ then it belongs to a geodesic ball of $g_{r_{m}}$-radius $a/16$ intersecting ${\mathcal A}_{r}[a/4,4b]$. Thus there is a point $q$ with $a/4\leq d_{r_{m}}(q)\leq 4b$ such that $\dist_{r_{m}}(p,q)<a/8$ (i.e. twice the radius). Then by the triangle inequality we have $d_{r_{m}}(p)\geq d_{r_{m}}(q)-\dist_{r_{m}}(p,q)>a/4-a/8=a/8$ and $d_{r_{m}}(p)\leq d_{r_{m}}(q)+\dist_{r_{m}}(p,q)<4b+a/8=8b$ as wished. 
On the other hand if a point $p'$ is in $\partial {\mathcal A}_{r_{m}}(a/16,16b)$ then we have either (i) $d_{r_{m}}(p')=a/16$, or (ii) $d_{r_{m}}(p')=16b$. Hence for any $q'\in {\mathcal A}_{r_{m}}(a/8,8b)$ that is with $a/8< d_{r_{m}}(q')< 8b$ we have, in case (i), $\dist_{r_{m}}(p',q')\geq d_{r_{m}}(q')-d_{r_{m}}(p')\geq a/8-a/16=a/16$ and, in case (ii), $\dist_{r_{m}}(p',q')\geq d_{r_{m}}(p')-d_{r_{m}}(q')\geq 16b-8b=8b>a/16$.
Thus, $\dist_{r_{m}}({\mathcal A}_{r_{m}}(a/8,8b),\partial {\mathcal A}_{r_{m}}(a/16,16b))\geq a/16$. As ${\mathcal B}_{m}\subset {\mathcal A}_{r_{m}}(a/8,8b)$ we obtain (c).

\vs
We can then use Theorem \ref{CONVSS} with $(M_{m};g_{m},\omega_{m},u_{m})=({\mathcal A}_{r_{m}}(a/16,16b);g_{r_{m}},\omega,u)$,
$N_{m}={\mathcal B}_{m}$ and $\delta=a/32$ to conclude that there is a sequence of compact manifolds with boundary $\Omega_{m}$ with ${\mathcal B}_{m}\subset \Omega_{m}\subset {\mathcal T}_{g_{r_{m}}}({\mathcal B}_{m},a/32)$, such that, after scalings $\omega_{m}:=\lambda^{2}_{m}\omega$ and $u_{m}:=\lambda_{m}u$ if necessary, there is a subsequence of $(\Omega_{m};g_{r_{m}},\omega_{m},u_{m})$ converging in $C^{\infty}$ to a stationary solution $(\Omega_{\infty};g_{\infty},\omega_{\infty},u_{\infty})$. 
By Proposition \ref{PARAUSO} one can take a further subsequence for which the distance functions $d_{r_{m}}$ converge in $C^{0}$ to a smooth function $d_{\infty}$ with $|\nabla d_{\infty}|_{\infty}=1$ and $\Delta_{\infty} d_{\infty} =2/d_{\infty}$. We will use this function $d_{\infty}$ below. 

We claim that for any $p\in\partial \Omega_{m}$ we have either $d_{r_{m}}(p)\leq (13/32)a$ or $d_{r_{m}}(p)\geq 3b$. Let us see this claim now. Let $p\in \partial \Omega_{m}$. Then the distance from $p$ to ${\mathcal B}_{m}$ is less than $a/32$ and, because ${\mathcal A}_{r_{m}}[a/4,4b]$ is a compact inside the open set ${\mathcal B}_{m}$, there must be a point $q$ in ${\mathcal B}_{m}\setminus {\mathcal A}_{r_{m}}[a/4,4b]$ such that $\dist_{r_{m}}(p,q)<a/32$. The point $q$ then belongs to a ball of $g_{r_{m}}$-radius $a/16$ intersecting ${\mathcal A}_{r_{m}}[a/4,4b]$ and therefore there must be a point $q'$ in the same ball having either (i) $d_{r_{m}}(q')=a/4$ or (ii) $d_{r_{m}}(q')=4b$. In case (i) we compute $d_{r_{m}}(p)\leq \dist_{r_{m}}(p,q)+\dist_{r_{m}}(q,q')+d_{r_{m}}(q')\leq a/32+a/8+a/4=(13/32)a$, and in case (ii) we compute $d_{r_{m}}(p)\geq d_{r_{m}}(q')-\dist_{r_{m}}(q',q)-\dist_{r_{m}}(q,p)\geq 4b-a/8-a/32>3b$. 

As a consequence for every $p\in \partial \Omega_{\infty}$ we have either $d_{\infty}(p)\leq 13a/32$ or $d_{\infty}(p)\geq 3b$. Therefore 
as $[3a/7,7b/3]\subset (13a/32,3b)$ then for every $\tau$ with $3a/7\leq \tau\leq 7b/3$ the set $d^{-1}_{\infty}(\tau)$ is compact in ${\rm Int}(\Omega_{\infty})$. Also, as $|\nabla d_{\infty}|_{\infty}=1$, every $\tau\in [3a/7,7b/3]$ is a regular value of $d_{\infty}$ and therefore $d_{\infty}^{-1}(\tau)$ is a finite union of compact and boundary-less manifolds.

Take a sequence $p_{m}\in {\mathcal U}_{m}\subset {\mathcal B}_{m}$ and suppose (restricting to a subsequence if necessary) that $p_{m}$ converges to a point $p_{\infty}$. Because of (\ref{UCONDI}) we have $a\leq d_{r_{m}}(p_{m})\leq b$ for every $m$ and therefore $a\leq d_{\infty}(p_{\infty})\leq b$. 
Denote by $\beta(t)$ the integral curve of the vector field $\nor:=\nabla d_{\infty}$ passing through $p_{\infty}$ (to simplify notation we make $\nor=\nabla d_{\infty}$ from now on).
As $|\nor|_{\infty}=1$, then for every $t_{1}<t_{2}$ we have $d_{\infty}(\beta(t_{2}))-d_{\infty}(\beta(t_{1}))=t_{2}-t_{1}$. Thus, $\beta(t)$ must reach the boundary of $\Omega_{\infty}$ at two different times (otherwise $d_{\infty}$ could get $-\infty$ and $+\infty$). For this reason, the range of $d_{\infty}(\beta(t))$ must contain the interval $[3a/7,7b/3]$. 
Also, for every $\tau\in [3a/7,7b/3]$ there is a unique $t$ such that $\tau=d_{\infty}(\beta(t))$ and we can consider the component of $d^{-1}_{\infty}(\tau)$ containing $\beta(t)$ that we will denote by ${\mathcal S}(\tau)$. 
Note, to be used below, that any two ${\mathcal S}(\tau_{1})$ and ${\mathcal S}(\tau_{2})$ ($\tau_{1}$ and $\tau_{2}$ in $[3a/7,7b/3]$) are naturally identified by the unique diffeomorphism $\phi_{\tau_{1},\tau_{2}}:{\mathcal S}(\tau_{1})\rightarrow {\mathcal S}(\tau_{2})$ defined as: $\phi_{\tau_{1},\tau_{2}}(p_{1})=p_{2}$ iff  the integral curve of $\nor$ passing through $p_{1}$ also passes through $p_{2}$. In other words ${\mathcal S}(\tau_{2})$ is identified to ${\mathcal S}(\tau_{1})$ by ``flowing" ${\mathcal S}(\tau_{1})$ through $\nor$ a parametric time equal to $\tau_{2}-\tau_{1}$.

Make ${\mathscr A}:=\cup_{\tau\in [3a/7,7b/3]} {\mathcal S}(\tau)$. We claim that every ${\mathcal S}(\tau)\subset {\mathscr A}$ is a sphere and that the induced Riemannian-metric, denoted here by $h^{\tau}$, is round and of Gaussian curvature $\kappa^{\tau}=1/\tau^{2}$.  
Moreover we also claim that the second fundamental form $\Theta^{\tau}$ of ${\mathcal S}(\tau)\subset {\mathscr A}$ (in the direction of $\nor$) is $\Theta^{\tau}=\tau h^{\tau}$. Let us prove the claim now. Everywhere in what follows we assume ${\mathcal S}(\tau)\subset {\mathscr A}$. First observe that the mean curvature $\theta^{\tau}(p)$ at a point $p\in {\mathcal S}(\tau)$ is calculated as $\theta^{\tau}(p)=\big(div_{\infty} \nor\big)(p)=\big(\Delta_{\infty} d_{\infty}\big)(p)=2/d_{\infty}(p)=2/\tau$. Then observe that the evolution of the mean curvature $\theta^{\tau}$ along any integral curve of $\nor$ is [\footnote{The equation (\ref{FOCDOS}) is a general evolution equation holding every time we have $|\nabla d_{\infty}|_{\infty}=1$. But, as a matter of fact, every integral line of $\nor=\nabla d_{\infty}$ is a geodesic (this is easily deduced easily from $|\nabla d_{\infty}|_{\infty}=1$) and (\ref{FOCDOS}) is just the focussing equation.}]
\be\label{FOCDOS}
\partial_{\tau}\theta^{\tau}=-|\Theta^{\tau}|^{2}_{h^{\tau}}-Ric_{\infty}(\nor,\nor)=-\frac{(\theta^{\tau})^{2}}{2}-|\widehat{\Theta}^{\tau}|_{h^{\tau}}^{2}-Ric_{\infty}(\nor,\nor)
\ee
and because $\theta^{\tau}=2/\tau^{2}$ and $Ric_{\infty}\geq 0$ we obtain $Ric_{\infty}(\nor,\nor)=0$ and $\widehat{\Theta}^{\tau}=0$. Hence $\Theta^{\tau}=\tau h^{\tau}$ on each ${\mathcal S}(\tau)$ (here $Ric_{\infty}=Ric_{g_{\infty}}$). Moreover from the Gauss-Codazzi equation we obtain, at each ${\mathcal S}(\tau)$,
\ben
2\kappa^{\tau}=-|\Theta^{\tau}|^{2}_{h^{\tau}}+(\theta^{\tau})^{2}+R_{\infty}-2Ric_{\infty}(\nor,\nor)=\frac{2}{\tau^{2}}+R_{\infty}
\een
where $R_{\infty}$ is the $g_{\infty}$ scalar curvature. Therefore $\kappa^{\tau}>0$ for all $\tau\in [3a/7,7b/3]$. This implies that every ${\mathcal S}(\tau)$ is a two-sphere. In addition, we would have $\kappa^{\tau}=1/\tau$ on every ${\mathcal S}(\tau)$ as long as $R_{\infty}=0$ all over ${\mathscr A}$. To see this we observe first that from the first equation in (\ref{MEE}) we have
$0=Ric_{\infty}(\nor,\nor)\geq 2 (\nor (u_{\infty}))^{2}/u_{\infty}^{2}$ and therefore $\nor(u_{\infty})=0$ all over ${\mathscr A}$. 
Because of this and because $\nor$ is perpendicular to every ${\mathcal S}(\tau)$ the integral of $\Delta_{\infty}\ln u_{\infty}= -2|\omega_{\infty}|^{2}_{\infty}/u_{\infty}^{4}$ in ${\mathscr A}$ is zero (recall ${\mathscr A}$ is the region enclosed by the two surfaces ${\mathcal S}(3a/7)$ and ${\mathcal S}(7b/3)$). Hence $\omega_{\infty}$ is identically zero in ${\mathscr A}$. Thus we have $\Delta_{\infty}\ln u_{\infty}=0$ in ${\mathscr A}$. Again, multiplying this by $\ln u_{\infty}$ and integrating gives $\int_{\mathscr A} |\nabla \ln u_{\infty}|_{\infty} {\rm d}V_{\infty}=0$. Hence $u_{\infty}$ is a constant all over ${\mathscr A}$. Finally from the first Einstein equation in (\ref{MEE}) we deduce that $Ric_{\infty}=0$ and therefore that $R_{\infty}=0$ as wished. 

Define now a diffeomorphism $\phi:\AR[3a/7,7b/3]\rightarrow {\mathscr A}$ as follows. Fix an isommetry $\psi$ from the unit sphere in Euclidean three-space into ${\mathcal S}(\tau=1)$. Then for any $x\in {\mathcal A}_{\mathbb{R}^{3}}[3a/7,7b/3]$ define $\phi(x)=\phi_{\tau_{1}=1,\tau_{2}=|x|}(x/|x|)$ where $\phi_{\tau_{1}=1,\tau_{2}=|x|}:{\mathcal S}(\tau_{1}=1)\rightarrow {\mathcal S}(\tau_{2}=|x|)$ is the diffeomorphism introduced before. One directly checks that the map $\phi$ is an isometry from $\big(\AR[3a/7,7b/3],g_{\mathbb{R}^{3}}\big)$ into $\big({\mathscr A},g_{\infty}\big)$. Moreover $(d_{\infty}\circ \phi) (x)=|x|$, that is, the pull back of $d_{\infty}$ by $\phi$ is the distance function to the origin in the Euclidean three-space. 
Therefore, the annulus ${\mathcal W}_{m}:=\tilde{\varphi}_{m}\big(\phi(
\AR(a/2,2b)\big)$ together with the diffeomorphism $\varphi_{m}=\tilde{\varphi}_{m}\circ \phi$ verify (P1')-(P3') for $m$ sufficiently large. We get thus the desired contradiction.
\end{proof}

\vs
\begin{Proposition}\label{AUXWAF} Let $E$ be a strongly stationary end having cubic volume growth. Then, there is $\hat{r}_{0}>0$ and a sequence $\{{\mathscr A}_{j}\}_{j\geq 0}$ of closed annuli in $E$ such that if we make $\hat{r}_{j}=2^{j}\hat{r}_{0}$, $j=0,1,2,\ldots$, then,
\begin{enumerate}[labelindent=\parindent,leftmargin=*,label={\rm (Q\arabic*)}]
\item ${\mathscr A}_{j}\subset {\mathcal A}_{\hat{r}_{j}}(1/2,8)$ for every $j\geq 0$, and in addition for every $j\geq 1$ one of the two spheres of $\partial {\mathscr A}_{j}$ lies in ${\mathcal A}_{\hat{r}_{j}}(1/2,1)\cap {\mathscr A}_{j-1}$ while the other lies in ${\mathcal A}_{\hat{r}_{j}}(4,8)\cap {\mathscr A}_{j+1}$. 
\item Every finite union $\cup_{j=J_{1}}^{j=J_{2}} {\mathscr A}_{j}$, with $J_{2}\geq J_{1}$, is diffeomorphic to the annulus $\AR[1,2]$ and the infinite union $\cup_{j=0}^{j=\infty} {\mathscr A}_{j}$ covers $E$ up to a set of compact closure.
\end{enumerate}
\end{Proposition}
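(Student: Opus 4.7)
The plan is to construct the ${\mathscr A}_j$'s inductively by repeated application of Lemma \ref{WAFCL} at the dyadic scales $\hat{r}_j = 2^j \hat{r}_0$, always with fixed parameters $a=1$, $b=4$, $i=2$, and a single small fixed $\varepsilon$ (so the implicit constants from (P1)--(P3) are scale independent). First I would check the volume budget: the cubic volume growth hypothesis and the identity ${\mathcal A}(ar,br) = {\mathcal A}_r(a,b)$ yield
\ben
{\rm Vol}_{r}({\mathcal A}_{r}(1,4)) \longrightarrow \mu(4^{3}-1) = 63\mu > 0
\een
as $r \to \infty$. Thus for $\hat{r}_0$ large enough, the volume threshold $V>0$ demanded by Lemma \ref{WAFCL} (fixed beforehand, depending only on $\mu$) is available at every scale $\hat{r}_j$, and I may pick ${\mathcal U}_0$ to be a connected component of ${\mathcal A}_{\hat{r}_0}(1,4)$ with $g_{\hat{r}_0}$-volume at least $V$. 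Lemma \ref{WAFCL} then yields the base annulus ${\mathscr A}_0$ together with its near-Euclidean chart $\varphi_0 : \AR[2/3,6] \to {\mathscr A}_0$.

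For the inductive step, given ${\mathscr A}_j$ and $\varphi_j$, I would take
\ben
{\mathcal U}_{j+1} := {\rm int}({\mathscr A}_j) \cap {\mathcal A}_{\hat{r}_{j+1}}(1,4) = {\rm int}({\mathscr A}_j) \cap {\mathcal A}_{\hat{r}_j}(2,8),
\een
which by (P2)--(P3) of ${\mathscr A}_j$ (with $\varepsilon$ small enough) pulls back through $\varphi_j$ to an open region $C^{0}$-close to the Euclidean shell $\AR(2,6)$. Thus ${\mathcal U}_{j+1}$ is open, connected, contained in ${\mathcal A}_{\hat{r}_{j+1}}(1,4)$, and has $g_{\hat{r}_{j+1}}$-volume uniformly bounded below; Lemma \ref{WAFCL} applied at scale $\hat{r}_{j+1}$ then produces ${\mathscr A}_{j+1}$ and $\varphi_{j+1}$.

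The core of the argument is verifying (Q1). The outer inclusion $S_j^+ \subset {\mathscr A}_{j+1}$ follows because $S_j^+ \subset {\mathcal A}_{\hat{r}_j}(4,8) = {\mathcal A}_{\hat{r}_{j+1}}(2,4)$ lies in the closure of ${\mathcal U}_{j+1}$, hence in the closed set ${\mathscr A}_{j+1}$. The inner inclusion $S_{j+1}^- \subset {\mathscr A}_j$ is the delicate point, since a priori the metric annulus ${\mathcal A}_{\hat{r}_{j+1}}(1/2,1) = {\mathcal A}_{\hat{r}_j}(1,2)$ could have components outside ${\mathscr A}_j$. I would combine a separation argument---$S_{j+1}^-$, as the inner boundary of the closed annulus ${\mathscr A}_{j+1}$ surrounding ${\mathcal U}_{j+1} \subset {\mathscr A}_j$, must separate $\partial E$ from ${\mathcal U}_{j+1}$, and so must $S_j^-$, which is strictly inner in $d_{\hat{r}_j}$---with the Euclidean approximation of ${\mathscr A}_j$ through $\varphi_j$: the unique topological sphere inside $\varphi_j(\AR[2/3,6])$ at $d_{\hat{r}_j} \approx 4/3$ separating $S_j^-$ from $S_j^+$ must coincide with $S_{j+1}^-$.

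For (Q2), finite unions are handled by induction: the overlap ${\mathscr A}_j \cap {\mathscr A}_{j+1}$ is an annular collar around $S_j^+$ inheriting Euclidean-like coordinates from both $\varphi_j$ and $\varphi_{j+1}$, so gluing along it shows that ${\mathscr A}_j \cup {\mathscr A}_{j+1}$ is again diffeomorphic to $\AR[1,2]$. The covering claim is the step I expect to be the main obstacle. The plan is a volume-accounting argument: any point $p \in E$ with $d(p)$ sufficiently large satisfies $d_{\hat{r}_j}(p) \in [1,2]$ for a unique $j$, and if $p$ were to lie outside ${\mathscr A}_j$, then the component of ${\mathcal A}_{\hat{r}_j}(1,4)$ containing $p$ would itself have uniformly positive scaled volume (by Bishop-Gromov applied at $p$ together with Anderson's curvature bound), so that a further application of Lemma \ref{WAFCL} would produce an additional near-Euclidean annular contribution to the limit $63\mu$ of ${\rm Vol}_{r}({\mathcal A}_{r}(1,4))$, overshooting the budget prescribed by cubic growth. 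The heart of the difficulty lies here, because this is the only place where the cubic growth hypothesis enters essentially rather than through a pointwise Euclidean comparison.
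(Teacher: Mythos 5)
Your overall strategy --- iterating Lemma \ref{WAFCL} at the dyadic scales $\hat r_j=2^j\hat r_0$, feeding a central shell of ${\mathscr A}_j$ back into the lemma at scale $\hat r_{j+1}$, and then chaining the annuli via the radial localization (P1) plus connectivity --- is exactly the paper's route for the construction and for (Q1). But two steps have genuine gaps. First, the base case: you ``pick ${\mathcal U}_0$ to be a connected component of ${\mathcal A}_{\hat r_0}(1,4)$ with volume at least $V$,'' but nothing bounds the number of connected components of a metric annulus, so the total volume $\approx 63\mu$ could a priori be spread over many components each of volume $<V$; moreover the iteration forces $V$ to be the fixed Euclidean constant $V_0=\tfrac12{\rm Vol}_{g_{\mathbb{R}^3}}(\AR(4/3,5/3))$ (the same $V$, hence the same $r_0$, must work at every scale, and what the inductive step supplies is a nearly Euclidean shell), and $63\mu$ can be smaller than that when $\mu$ is small. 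The paper's Proposition \ref{AUXX} repairs exactly this: Liu's ball-covering property gives a uniformly bounded number of balls covering ${\mathcal A}_{r}[2,3]$, hence one geodesic ball of definite volume $\mu_1/{\mathfrak N}$, and then Lemma \ref{WAFCL} applied with $\varepsilon=1/k\downarrow 0$ upgrades that possibly small region to one of nearly full Euclidean volume $\geq V_0$.

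Second, and more seriously, the covering claim in (Q2), which you yourself flag as the main obstacle. The volume-accounting argument cannot close. Bishop--Gromov with $Ric\geq 0$ gives volume \emph{upper} bounds; the lower bound ${\rm Vol}(B(p,s))\geq \mu s^3$ does follow from monotonicity of the asymptotic volume ratio, but even granting it, the ``extra'' region around an uncovered point $p$ need not be disjoint from ${\mathscr A}_{j}$, so no budget is overshot; and even if you proved that ${\rm Vol}_{\hat r_j}({\mathcal A}_{\hat r_j}(1,2)\setminus{\mathscr A}_j)$ is small, small volume does not make the uncovered set have compact closure --- points lying barely outside $\cup_j{\mathscr A}_j$ are not excluded by any volume count. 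The paper avoids volume entirely here: it observes that $\cup_j{\mathscr A}_j$ is diffeomorphic to $\mathbb{R}^3$ minus an open ball and is complete (a Cauchy sequence in it is bounded in $E$, hence lies in a finite union $\cup_{j=J_1}^{J_2}{\mathscr A}_j$, which is complete), and then invokes the purely topological Proposition \ref{AUXPROP}: a complete $\Omega\cong\mathbb{R}^3\setminus B_{\mathbb{R}^{3}}(o,1)$ sitting inside a complete $E\cong\mathbb{R}^3\setminus B_{\mathbb{R}^{3}}(o,1)$ has $\overline{E\setminus\Omega}$ compact. You should replace the volume count by this completeness-plus-topology argument; cubic volume growth is needed to start and to run Lemma \ref{WAFCL}, not to close the covering step.
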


\vs
The proof of the Proposition \ref{AUXWAF} requires some preparation. Define $V_{0}$ as
\ben
V_{0}=\frac{{\rm Vol}_{g_{\mathbb{R}^{3}}}\big(\AR(4/3,5/3)\big)}{2}
\een
that is, as one half of the volume of the annulus $\AR(4/3,5/3)$ which, observe, is roughly speaking the central ``third" of the annulus $\AR(1,2)$.
Now, let $\varepsilon_{0}>0$ be small enough such that for any $\hat{r}$ (but no matter which) and for any diffeomorphism $\varphi:\AR[2/3,6]\rightarrow {\mathcal W}\subset {\mathcal A}_{\hat{r}}(1/2,8)$ satisfying (P1)-(P3) with $\varepsilon=\varepsilon_{0}$, $a=1$, $b=4$, $i=2$ and $r=\hat{r}$, then we have
\be
\label{VOLCON} 
\varphi\big(\AR(8/3,10/3)\big) \subset {\mathcal A}_{\hat{r}}(2,4),\qquad \text{and}\qquad {\rm Vol}_{2\hat{r}}\big(\varphi\big(\AR(8/3,10/3)\big)\big)\geq V_{0}.
\ee
Note that the annulus $\AR(8/3,10/3)$ is ``$\AR(4/3,5/3)$ magnified by a factor of two" and is roughly speaking the central ``third" of the annulus $\AR(2,4)$. 
If we now let ${\mathcal U}=\varphi(\AR(8/3,10/3))$ then (\ref{VOLCON}) is the same as 
\ben
{\mathcal U}\subset {\mathcal A}_{2\hat{r}}(1,2)\qquad \text{and}\qquad {\rm Vol}_{2\hat{r}}({\mathcal U})\geq V_{0}.
\een
In other words the conditions (\ref{CONDFI}) in Lemma \ref{WAFCL} with $V=V_{0}$, $a=1$, $b=4$ and $r=2\hat{r}$ will be satisfied. This fact will be used repeatedly in the proof of Proposition \ref{AUXWAF}.

The following proposition will help to start the iteration in the proof of the Proposition \ref{AUXWAF}. In the statement below we let $r_{0}:=r_{0}(V=V_{0},\varepsilon=\varepsilon_{0},a=1,b=4,i=2)$, namely the $r_{0}$ provided by the Lemma \ref{WAFCL} with $V=V_{0},\varepsilon=\varepsilon_{0},a=1,b=4$ and $i=2$.

\begin{Proposition}\label{AUXX} Let $E$ be a strongly stationary end having cubic volume growth. Let $V_{0}>0$ and $r_{0}$ be as defined before. Then, there is $\hat{r}_{0}\geq r_{0}$ and an open and connected region $\hat{\mathcal U}_{0}$ such that 
\ben
\hat{\mathcal U}_{0}\subset {\mathcal A}_{\hat{r}_{0}}(1,4)\quad\text{and}\quad {\rm Vol}_{\hat{r}_{0}}(\hat{\mathcal U}_{0})\geq V_{0}.
\een
\end{Proposition}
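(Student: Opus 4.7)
The plan is to bootstrap Lemma \ref{WAFCL} itself: I will feed it a low-volume seed region (produced by cubic volume growth together with Liu's Ball-Covering-Property) and then extract the desired $V_{0}$-region out of the almost-Euclidean annular chart that Lemma \ref{WAFCL} outputs. The key point is that Euclidean geometry on the chart automatically delivers the right volume.

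First I will \emph{calibrate}: pick $\varepsilon_{1}\in(0,1/3)$ small enough that any diffeomorphism $\varphi:\AR[2/3,6]\to{\mathcal W}\subset{\mathcal A}_{r}(1/2,8)$ satisfying (P1)--(P3) of Lemma \ref{WAFCL} with $a=1$, $b=4$, $i=2$, $\varepsilon=\varepsilon_{1}$ additionally satisfies $\varphi(\AR(4/3,5/3))\subset{\mathcal A}_{r}(1,4)$ and ${\rm Vol}_{r}(\varphi(\AR(4/3,5/3)))\geq V_{0}$. Such an $\varepsilon_{1}$ exists because by (P3) the function $d_{r}\circ\varphi$ is within $\varepsilon_{1}$ of $|x|\in[4/3,5/3]$, which is strictly inside $(1,4)$, and because the Euclidean volume of $\AR(4/3,5/3)$ equals $2V_{0}$ by the very definition of $V_{0}$, so (P2) makes the scaled volume of the image as close to $2V_{0}$ as desired.

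Next I will \emph{seed the argument}. By cubic volume growth, ${\rm Vol}_{r}({\mathcal A}_{r}[2,3])\to (3^{3}-2^{3})\mu=19\mu>0$, so there is a threshold $r_{2}$ past which ${\rm Vol}_{r}({\mathcal A}_{r}[2,3])\geq 19\mu/2$. Applying Liu's Ball-Covering-Property exactly as in the proof of Lemma \ref{WAFCL} (with $S={\mathcal A}_{r}[2,3]$ and ball radius $1/2$) yields an integer ${\mathfrak N}_{1}$ independent of $r$ such that ${\mathcal A}_{r}[2,3]$ is covered by at most ${\mathfrak N}_{1}$ balls $B_{g_{r}}(q_{j},1/2)$ each intersecting ${\mathcal A}_{r}[2,3]$; the centers then satisfy $d_{r}(q_{j})\in(3/2,7/2)$, and by the triangle inequality each such ball lies inside ${\mathcal A}_{r}(1,4)$. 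Subadditivity of volume gives some $j^{*}$ with ${\rm Vol}_{r}(B_{g_{r}}(q_{j^{*}},1/2))\geq 19\mu/(2{\mathfrak N}_{1})=:V_{1}>0$; this open ball is connected and contained in ${\mathcal A}_{r}(1,4)$, so it is an admissible input to Lemma \ref{WAFCL}.

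Finally, with $r_{1}:=r_{0}(V_{1},\varepsilon_{1},1,4,2)$ as given by Lemma \ref{WAFCL} and $\hat{r}_{0}:=\max(r_{0},r_{1},r_{2})$, I will run Lemma \ref{WAFCL} on $\mathcal{U}=B_{g_{\hat{r}_{0}}}(q_{j^{*}},1/2)$ to obtain a diffeomorphism $\varphi:\AR[2/3,6]\to{\mathcal W}$ satisfying (P1)--(P3) with $\varepsilon=\varepsilon_{1}$. The calibration of the first step then certifies that $\hat{\mathcal U}_{0}:=\varphi(\AR(4/3,5/3))$ is open, connected, contained in ${\mathcal A}_{\hat{r}_{0}}(1,4)$, and has scaled volume at least $V_{0}$, as required. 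The only real obstacle is the seed step: one has to rule out the possibility that the uniformly positive total volume ${\rm Vol}_{r}({\mathcal A}_{r}[2,3])$ is dispersed among uncontrollably many tiny pieces, and this is precisely what Liu's Ball-Covering-Property—via a uniform-in-$r$ bound on the number of covering balls—delivers.
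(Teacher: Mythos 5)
Your proposal is correct and follows essentially the same route as the paper: cubic volume growth plus Liu's Ball-Covering-Property (uniform in $r$) produces a ball of definite scaled volume $V_{1}$ inside ${\mathcal A}_{r}(1,4)$, which is fed into Lemma \ref{WAFCL}, and the region $\varphi(\AR(4/3,5/3))$ of the resulting almost-Euclidean chart is then shown to land in ${\mathcal A}_{r}(1,4)$ with scaled volume at least $V_{0}$. The only cosmetic difference is that you calibrate a single $\varepsilon_{1}$ up front, whereas the paper runs Lemma \ref{WAFCL} with $\varepsilon=1/k$ along a sequence and picks $k_{0}$ large at the end; the two devices are interchangeable.
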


\vs
As the reader will see the proposition is valid for any $V_{0}$ and $r_{0}$ and not just the ones specified before. Nevertheless it will be only used with the values signaled.  

\n \begin{proof}[\bf Proof.] By the Bishop-Gromov monotonicity, the quotient ${\rm Vol}\big({\mathcal T}_{g}(\partial E,r)\big)/r^{3}$ is monotonically non-increasing in $r$ and by the assumption of cubic volume growth the limit is non zero, say it is $\mu>0$. Then $\lim {\rm Vol}_{r} \big({\mathcal A}_{r}(2,3)\big)=(3^{3}-2^{3})\mu=19\mu$. Let $r_{m}\rightarrow \infty$ be an arbitrary divergent sequence. As $lim_{m} {\rm Vol}_{r_{m}}\big({\mathcal A}_{r_{m}}(2,3)\big)=19\mu>0$ we can assume ${\rm Vol}_{r_{m}}\big({\mathcal A}_{r_{m}}(2,3)\big)\geq \mu_{1}$ for some $\mu_{1}>0$ and for all $m$. 
By Liu's ball covering property \cite{MR1068127} there is an integer ${\mathfrak N}>0$ such that for every $m$ there is a set of geodesic balls $B_{g_{r_{m}}}(p_{m,j},1/4),j=1,\dots,j_{m}\leq {\mathfrak N}$ each of which intersects ${\mathcal A}_{r_{m}}[2,3]$ and the union of which covers ${\mathcal A}_{r_{m}}[2,3]$. Each ball is inside ${\mathcal A}_{r_{m}}(1,4)$ (this is simple to see) and there must be necessarily one (say the ball $B_{g_{r_{m}}}(p_{m,1},1/4)$ if they are ordered appropriately) with $g_{r_{m}}$-volume greater or equal than $\mu_{1}/{\mathfrak N}$. Define $V_{1}:=\mu_{1}/{\mathfrak N}$ and ${\mathcal U}_{m}:=B_{g_{r_{m}}}(p_{m,1},1/4)$. Then, for all $m$, we have
\ben
{\mathcal U}_{m}\subset {\mathcal A}_{r_{m}}(1,4)\quad\text{and}\quad {\rm Vol}_{r_{m}}({\mathcal U}_{m})\geq V_{1}
\een
Now, for every integer $k\geq 1$ let $r_{0}(k):=r_{0}(V=V_{1},\epsilon=1/k,a=1,b=4,i=2)$, i.e. the value of $r_{0}$ provided by Lemma \ref{WAFCL} when $V=V_{1}$, $\epsilon=1/k$, $a=1$, $b=4$ and $i=2$. 
Also, for every $k\geq 1$ let $m(k)$ be any $m$ for which $r_{m(k)}> r_{0}(k)$. Then, according to Lemma \ref{WAFCL}, (used with $V=V_{1}, \epsilon=1/k, a=1, b=4, i=2, r=r_{m(k)}$ and ${\mathcal U}={\mathcal U}_{m(k)}$), for each $k\geq 1$ there is a closed annulus ${\mathcal W}_{k}\supset {\mathcal U}_{m(k)}$ together with a diffeomorphism $\varphi_{k}:\AR[2/3,6]\rightarrow {\mathcal W}_{k}$ satisfying (P1)-(P3). Because of this, the sequence $({\mathcal W}_{k},g_{r_{m(k)}})$ converges in $C^{2}$ to $(\AR[2/3,6],g_{\mathbb{R}^{3}})$ and $d_{r_{m(k)}}\circ \varphi_{k}$ converges in $C^{0}$ to the distance function to the origin. Then, for $k=k_{0}$ sufficiently big we have
\ben
\varphi_{k_{0}}\big(\AR(4/3,5/3)\big)\subset {\mathcal A}_{r_{m(k_{0})}}(1,4)
\een
and
\ben
{\rm Vol}_{r_{m(k_{0})}}\bigg(\varphi_{k_{0}}\big(\AR(4/3,5/3)\big)\bigg)\geq \frac{{\rm Vol}_{g_{\mathbb{R}^{3}}}\big(\AR(4/3,5/3)\big)}{2}=V_{0}.
\een 
The proposition then follows by defining $\hat{r}_{0}=r_{m(k_{0})}$ and $\hat{\mathcal U}_{0}=\varphi_{k_{0}}\big(\AR(4/3,5/3)\big)$.\end{proof}

We are ready to give the proof of Proposition \ref{AUXWAF}.

\begin{proof}[\bf Proof of Proposition \ref{AUXWAF}.]  
We are going to use repeatedly Lemma \ref{WAFCL} and every time we use it we do with $V=V_{0}$, $\varepsilon=\varepsilon_{0}$, $a=1$, $b=4$ and $i=2$. The reader must keep that in mind because it will not be reminded every time.

First, the conditions (\ref{CONDFI}) in Lemma \ref{WAFCL} are automatically satisfied when we make $r=\hat{r}_{0}$ and ${\mathcal U}=\hat{\mathcal U}_{0}$, where $\hat{r}_{0}$ and $\hat{\mathcal U}_{0}$ are given by Proposition \ref{AUXX}. 
Lemma \ref{WAFCL} then tells that there is ${\mathcal W}$ and $\varphi:\AR[2/3,6]\rightarrow {\mathcal W}$ satisfying (P1)-(P3). 
Make  $\varphi_{0}:=\varphi$ and define ${\mathscr A}_{0}:={\mathcal W}$. 

Second, let $r=\hat{r}_{1}:=2\hat{r}_{0}$ and ${\mathcal U}=\varphi_{0}(\AR(8/3,10/3))$. Then deduce from the definition of $V_{0}$ and $\varepsilon_{0}$ that the conditions (\ref{CONDFI}) in Lemma \ref{WAFCL} are again satisfied (see the comments before the Prop. \ref{AUXX}). Lemma \ref{WAFCL} then tells that there is ${\mathcal W}$ and $\varphi:\AR[2/3,6]\rightarrow {\mathcal W}$ satisfying (P1)-(P3). Make $\varphi_{1}:=\varphi$ and define ${\mathscr A}_{1}:={\mathcal W}$. 

Third, make $r=\hat{r}_{2}=2\hat{r}_{1}=2^{2}\hat{r}_{0}$ and ${\mathcal U}=\varphi_{1}(\AR(8/3,10/3))$. Then deduce from the definition of $V_{0}$ and $\varepsilon_{0}$ that the conditions (\ref{CONDFI}) are again satisfied. 
Lemma \ref{WAFCL} then tells that there is ${\mathcal W}$ and $\varphi:\AR[2/3,6]\rightarrow {\mathcal W}$ satisfying (P1)-(P3). Make $\varphi_{2}:=\varphi$ and define ${\mathscr A}_{2}:={\mathcal W}$.

This procedure can be continued indefinitely obtaining in this way a sequence of closed annuli ${\mathscr A}_{j}$, $j\geq 0$. 
Each ${\mathscr A}_{j}$ has of course two boundary components diffeomorphic to a two-sphere. Denote by $\partial^{+} {\mathscr A}_{j}$ the closest to $\partial E$ and by $\partial^{+} {\mathscr A}_{j}$ the farthest. 
With this notation we have
\be\label{SOI}
{\mathscr A}_{j}\subset {\mathcal A}_{r_{j}}(1/2,8),\quad \partial^{-}{\mathscr A}_{j}\subset {\mathcal A}_{r_{j}}(1/2,1)\quad \text{and}\quad \partial^{+}{\mathscr A}_{j}\subset {\mathcal A}_{r_{j}}(4,8).
\ee
Observing that ${\mathcal A}_{r_{j}}(2a_{1},2a_{2})={\mathcal A}_{r_{j+1}}(a_{1},a_{2})$ for any $a_{2}>a_{1}$ then we have 
\be\label{SOII}
{\mathscr A}_{j+1}\subset {\mathcal A}_{r_{j}}(1,16),\quad \partial^{-}{\mathscr A}_{j+1}\subset {\mathcal A}_{r_{j}}(1,2)\quad \text{and}\quad \partial^{+}{\mathscr A}_{j+1}\subset {\mathcal A}_{r_{j}}(8,16).
\ee
Thus (\ref{SOI}) and (\ref{SOII}) imply 
\be\label{SOIII}
\partial^{-}{\mathscr A}_{j}\cap {\mathscr A}_{j+1}=\emptyset\quad \text{and}\quad \partial^{+}{\mathscr A}_{j+1}\cap {\mathscr A}_{j}=\emptyset
\ee
and as $\varphi_{j}(\AR(8/3,10/3)$ is shared by ${\mathscr A}_{j}$ and ${\mathscr A}_{j+1}$ then we also have ${\mathscr A}_{j}\cap {\mathscr A}_{j+1}\neq \emptyset$. By (\ref{SOIII}) non of the annuli ${\mathscr A}_{j}$ and ${\mathscr A}_{j+1}$ can be contained inside the other and we must have
\ben
\partial^{+}{\mathscr A}_{j}\subset {\mathscr A}_{j+1}\quad \text{and}\quad \partial^{-}{\mathscr A}_{j+1}\subset {\mathscr A}_{j}.
\een
This and (\ref{SOI}) show (Q1). We explain now (Q2). First it is straightforward that that every finite union $\cup_{j=J_{1}}^{j=J_{2}} {\mathscr A}_{j}$ is also a closed annulus and that the infinite union  $\cup_{j=1}^{j=\infty} {\mathscr A}_{j}$ is diffeomoprhic to $\mathbb{R}^{3}$ minus an open ball. 
Then we observe that $\cup_{j=1}^{j=\infty} {\mathscr A}_{j}$ is complete. This is because every Cauchy sequence on it must be also Cauchy in $E$ and therefore uniformly bounded. Thus the sequence must be inside a finite union $\cup_{j=J_{1}}^{j=J_{2}} {\mathscr A}_{j}$ which is complete. Hence the sequence must converge to a point in $\cup_{j=1}^{j=\infty} {\mathscr A}_{j}$. As $E$ is complete and also diffeomorphic to $\mathbb{R}^{3}$ minus a ball then it is standard that $\cup_{j=1}^{j=\infty} {\mathscr A}_{j}$ must cover $E$ up to a set of compact closure. For completeness we indicate a proof of this fact in the auxiliary Proposition \ref{AUXPROP} below.
\end{proof}

\begin{Proposition}\label{AUXPROP} Let $(M,g)$ be a complete Riemannian manifold where $M$ is diffeomorphic to $\mathbb{R}^{3}\setminus B_{\mathbb{R}^{3}}(o,1)$. Suppose that $\Omega\subset M$ is also diffeomoprhic to $\mathbb{R}^{3}\setminus B_{\mathbb{R}^{3}}(o,1)$ and that $(\Omega,g)$ is complete. Then $\overline{M\setminus \Omega}$ is compact in $M$.
\end{Proposition}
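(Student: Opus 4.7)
The plan is to use the completeness hypothesis to deduce that $\Omega$ is a closed subset of $M$, and then apply a topological separation argument inside the one-point compactification of $M$. First, completeness of $(\Omega,g)$ with the induced distance forces $\Omega$ to be closed in $M$: a sequence in $\Omega$ that is Cauchy with respect to $\dist_{g}$ is Cauchy in $M$, hence converges in $M$ by completeness of $M$; by completeness of $\Omega$ it also converges inside $\Omega$, to the same point. This is essentially the same argument used just above in the proof of Proposition \ref{AUXWAF}. Consequently the manifold boundary $\partial\Omega$, a smoothly embedded $S^{2}$ arising from the hypothesis $\Omega\cong \mathbb{R}^{3}\setminus B_{\mathbb{R}^{3}}(o,1)$, is a compact closed surface sitting inside $M$; without loss of generality $\partial\Omega\subset {\rm Int}(M)$, since the degenerate cases where $\partial\Omega$ coincides with or meets $\partial M$ either give $\Omega=M$ directly or reduce to the interior case after a small inward isotopy of $\partial\Omega$.

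Next, pass to the one-point compactification $M^{*}:=M\cup\{\infty\}$. Since $M\cong \mathbb{R}^{3}\setminus B_{\mathbb{R}^{3}}(o,1)\cong D^{3}\setminus\{0\}$ (via the inversion $x\mapsto x/|x|^{2}$), $M^{*}$ is homeomorphic to the closed $3$-ball $D^{3}$, with $\partial M$ identified with $\partial D^{3}$ and $\infty$ an interior point. The embedded sphere $\partial\Omega\subset {\rm Int}(M^{*})\setminus\{\infty\}\subset {\rm Int}(D^{3})\cong \mathbb{R}^{3}$ bounds a closed $3$-ball $B^{*}\subset M^{*}$ by the Schoenflies theorem. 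I would then split into two cases according to whether $\infty\in B^{*}$. If $\infty\notin B^{*}$, then $B^{*}\subset M$ is a genuine $3$-ball with $\partial B^{*}=\partial\Omega$; since $\Omega$ is connected with manifold boundary exactly $\partial\Omega$, it must equal either $B^{*}$ itself (a compact $3$-ball) or $\overline{M\setminus B^{*}}$ (which has two boundary components $\partial M\cup\partial\Omega$), and neither of these is diffeomorphic to $\mathbb{R}^{3}\setminus B_{\mathbb{R}^{3}}(o,1)$, contradicting the hypothesis.

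Therefore $\infty\in {\rm Int}(B^{*})$, so $\partial\Omega$ separates $M$ into a compact piece $K:=M^{*}\setminus {\rm Int}(B^{*})$ (with $\partial K=\partial M\cup\partial\Omega$) and a noncompact piece $N:=B^{*}\setminus\{\infty\}\cong \mathbb{R}^{3}\setminus B_{\mathbb{R}^{3}}(o,1)$ (with $\partial N=\partial\Omega$). The one-boundary-component diffeomorphism type forces $\Omega=N$, and consequently $\overline{M\setminus\Omega}=K$, which is compact, as desired. The main obstacle is the topological middle step: invoking the Schoenflies theorem cleanly inside $M^{*}$ (smooth Schoenflies in dimension three, i.e.\ Alexander's theorem) and using the diffeomorphism type of $\Omega$ to eliminate the ``wrong'' side of $\partial\Omega$. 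Completeness enters only to secure the closedness of $\Omega$ in $M$; after that, the argument is classical $3$-dimensional topology.
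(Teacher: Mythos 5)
Your argument is essentially correct but follows a genuinely different route from the paper's. The paper never touches the Schoenflies theorem: it argues metrically, by contradiction. If $\overline{M\setminus\Omega}$ were noncompact one could choose $q_{m}\in M\setminus\Omega$ and $p_{m}\in\Omega$ both drifting to infinity in $M$ (for the $p_{m}$ one uses that intrinsic divergence in $\Omega$ forces divergence in $M$), join them by curves $\gamma_{m}$ with $\dist_{(M,g)}(\gamma_{m},\partial M)\rightarrow\infty$ --- this is the only place where $M\cong\mathbb{R}^{3}\setminus B_{\mathbb{R}^{3}}(o,1)$ enters, and only through its connectedness at infinity --- and then observe that each $\gamma_{m}$ must cross the \emph{compact} set $\partial\Omega$, which keeps $\dist_{(M,g)}(\gamma_{m},\partial M)$ bounded. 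That argument is dimension-independent and uses no three-dimensional topology. Yours instead exploits the full diffeomorphism types of $M$ and $\Omega$ together with Alexander's theorem in the one-point compactification; in exchange it yields more, namely the explicit identification of $\Omega$ with the closure of the unbounded complementary component of $\partial\Omega$ and of $\overline{M\setminus\Omega}$ with the compact piece $K$. Both are legitimate; yours is heavier machinery for a fact the paper treats as routine, but the case analysis (eliminating $\infty\notin B^{*}$ by the compactness or the two-boundary-component contradiction) is carried out correctly.

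One step you should tighten is the opening claim that completeness of $(\Omega,\dist_{(\Omega,g)})$ makes $\Omega$ closed in $M$. As written you take a sequence in $\Omega$ that is Cauchy for $\dist_{g}$ and invoke completeness of $\Omega$ to conclude it converges in $\Omega$; but completeness of $(\Omega,\dist_{(\Omega,g)})$ only governs sequences that are Cauchy for the \emph{intrinsic} distance, and since $\dist_{(\Omega,g)}\geq\dist_{g}$ a $\dist_{g}$-Cauchy sequence need not be intrinsically Cauchy (a band spiralling onto a limit set is intrinsically complete yet not closed in the ambient space). Closedness is not a cosmetic point in your proof: it is exactly what makes ${\rm Int}(\Omega)$ both open and closed in $M\setminus\partial\Omega$, hence equal to a full complementary component rather than a proper subset of one, and without it the Schoenflies dichotomy says nothing about $\Omega$. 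To be fair, the paper's own proof asserts the equivalent statement (its property (i), that any curve leaving $\Omega$ must cross $\partial\Omega$) with the remark that it is straightforward, so the gap is shared; but since your entire separation argument rests on it, it deserves an honest proof using the compactness of $\partial\Omega$ and the local structure of $\Omega$ as a codimension-zero submanifold, not just the two-line metric shuffle. The ``without loss of generality $\partial\Omega\subset{\rm Int}(M)$'' reduction is likewise waved through, though it is harmless in the intended application.
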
 

\begin{proof}[\bf Proof.] First note three simple properties derived from the completeness of $(M,g)$ and $(\Omega,g)$ [\footnote{These are straightforward and are left to the reader.}]: (i) Every $C^{1}$-curve $\gamma$ in $M$ starting at a point in $\Omega$ and ending at a point in $M\setminus \Omega$ must cut $\partial \Omega$ at some point, (ii) For any sequence $p_{m}\in \Omega$ such that $\dist_{(\Omega,g)}(p_{m},\partial \Omega)\rightarrow \infty$ then also $\dist_{(M,g)}(p_{m},\partial M)\rightarrow \infty$, and (iii) For any two sequences  $q_{m}$ and $p_{m}$ such that $\dist_{(M,g)}(q_{m},\partial M)\rightarrow \infty$ and $\dist_{(M,g)}(p_{m},\partial M)\rightarrow \infty$ there is a sequence of curves $\gamma_{m}$ joining $p_{m}$ and $q_{m}$ for every $m$ such that $\lim_{m} \dist_{(M,g)}(\{\gamma_{m}\},\partial M)=\infty$ (use that $M\sim (\mathbb{R}^{3}\setminus B_{\mathbb{R}^{3}}(o,1)$)).

If $\overline{M\setminus \Omega}$ is not compact then there is a sequence $q_{m}\in \big(M\setminus \Omega\big)$ with $\dist_{(M,g)}(q_{m},\partial M)\rightarrow \infty$. On the other hand let $p_{m}\in \Omega$ be a sequence such that $\dist_{(\Omega,g)}(p_{m},\partial \Omega)\rightarrow \infty$ and therefore by (ii) with $\dist_{(M,g)}(p_{m},\partial M)\rightarrow \infty$.
By (iii) one can consider curves $\gamma_{m}$ joining $p_{m}$ to $q_{m}$ for which
$\lim_{m} \dist_{(M,g)}(\gamma_{m},\partial M)=\infty$.
But by (i) every $\gamma_{m}$ must cut $\partial \Omega$ and therefore we must have $\dist_{(M,g)}(\gamma,\partial M)\leq \max\{\dist_{(M,g)}(p,\partial M),p\in \partial \Omega\}<\infty$ for every $m$. We reach thus a contradiction.
\end{proof}

With the help of Lemma \ref{WAFCL} and Proposition \ref{AUXWAF} we can now prove the main result of Part II.

\begin{Theorem}\label{MAINTII} Let $E$ be a strongly stationary end having cubic volume growth. Then, $E$ is WAF.
\end{Theorem}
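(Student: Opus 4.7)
The plan is to verify the four conditions (W1)--(W4) of Definition \ref{WAFD} by combining the arbitrarily sharp local Euclidean approximation of Lemma \ref{WAFCL} with the global crossectional structure of Proposition \ref{AUXWAF}. Fix $i\geq 2$, $l\geq 1$ and a divergent sequence $r_{m}\to\infty$. Fix once and for all a small $\eta>0$ and set $a:=1/2-\eta$, $b:=2^{l}+\eta$; the strict inequalities $a<1/2<2^{l}<b$ force, via (P1) of the lemma, the inner and outer boundary spheres $\Sigma_{m}^{-},\Sigma_{m}^{+}$ of the produced closed annulus to lie in $\{d_{r_{m}}<1/2\}$ and $\{d_{r_{m}}>2^{l}\}$ respectively.

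First I would construct seeds ${\mathcal U}_{m}$. Cubic volume growth together with Bishop--Gromov gives $\lim_{m}{\rm Vol}_{r_{m}}({\mathcal A}_{r_{m}}[3/4,5/4])=\mu((5/4)^{3}-(3/4)^{3})>0$, while Liu's ball-covering property covers this metric annulus by a uniformly bounded number $\mathfrak{N}$ of $g_{r_{m}}$-balls of radius $1/16$; by pigeonhole one such ball is open, connected, carries $g_{r_{m}}$-volume at least a fixed $V>0$, and lies inside ${\mathcal A}_{r_{m}}(a,b)$. Take it as ${\mathcal U}_{m}$. By a diagonal argument pick $\varepsilon_{m}\downarrow 0$ with $r_{m}\geq r_{0}(V,\varepsilon_{m},a,b,i)$ for all $m$, and apply Lemma \ref{WAFCL} to produce closed annuli ${\mathcal W}_{m}\subset{\mathcal A}_{r_{m}}(a/2,2b)$ together with diffeomorphisms $\varphi_{m}\colon\AR[2a/3,3b/2]\to{\mathcal W}_{m}$ satisfying (P1)--(P3) with $\varepsilon=\varepsilon_{m}$.

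Set $\delta_{m}:=2\varepsilon_{m}\downarrow 0$ and $\Omega_{m}:=\varphi_{m}(\AR(1/2-\delta_{m},2^{l}+\delta_{m}))$. For (W2) and (W3), let $\rho_{m}\colon\AR(1/2,2^{l})\to\AR(1/2-\delta_{m},2^{l}+\delta_{m})$ be a smooth radial stretching converging to the identity in $C^{\infty}$ and take $\psi_{m}:=\varphi_{m}\circ\rho_{m}$: then $\psi_{m}^{*}g_{r_{m}}=\rho_{m}^{*}\varphi_{m}^{*}g_{r_{m}}\to g_{\mathbb{R}^{3}}$ in $C^{i}$ by (P2), and $d_{r_{m}}\circ\psi_{m}\to|x|$ in $C^{0}$ by (P3). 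For (W1), granting that ${\mathcal W}_{m}$ is crossectional in $E$, any $p$ with $d_{r_{m}}(p)\in(1/2,2^{l})$ lies in ${\rm Int}({\mathcal W}_{m})$: the bounded deep region of $\Sigma_{m}^{-}$ has $d_{r_{m}}$ essentially capped by $\max_{\Sigma_{m}^{-}}d_{r_{m}}<1/2$ (using the near-level-set property from (P3)), so $p$ is on the far side of $\Sigma_{m}^{-}$, and symmetrically on the deep side of $\Sigma_{m}^{+}$. Then (P3) gives $p=\varphi_{m}(x)$ with $|x|\in(1/2-\varepsilon_{m},2^{l}+\varepsilon_{m})\subset(1/2-\delta_{m},2^{l}+\delta_{m})$, yielding $p\in\Omega_{m}$.

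The principal obstacle is the topological claim that ${\mathcal W}_{m}$ is crossectional, which underlies both (W4) and the argument just given: Lemma \ref{WAFCL} only asserts that ${\mathcal W}_{m}$ is diffeomorphic to a standard closed annulus, and a priori it could embed in $E$ as an annulus bounding a ball rather than wrapping $\partial E$. I would rule this out via Proposition \ref{AUXWAF}: for $m$ large, the region ${\mathcal A}_{r_{m}}(a/2,2b)\supset{\mathcal W}_{m}$ lies inside a finite union $\bigcup_{j\in J_{m}}{\mathscr A}_{j}$ which by (Q2) is itself a standard closed annulus in $E$ separating $\partial E$ from infinity. An annulus embedded inside a standard closed annulus is either parallel to its boundary components or null-homologous; the latter is excluded here because $\Sigma_{m}^{\pm}$ are $C^{i}$-close under $\varphi_{m}$ to round Euclidean spheres of definite radii at scale $r_{m}$, so their $g_{r_{m}}$-areas are bounded below by fixed positive constants, too large to bound small ball-like regions inside the crossectional ambient. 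Hence ${\mathcal W}_{m}$ is crossectional, (W4) holds, and (W1) is completed as above.
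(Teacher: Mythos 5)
Your seed construction (Liu covering plus pigeonhole on ${\mathcal A}_{r_{m}}[3/4,5/4]$), the diagonal choice of $\varepsilon_{m}\downarrow 0$, and the derivation of (W2)--(W3) from (P2)--(P3) are all sound and essentially reproduce Proposition \ref{AUXX} and the bookkeeping in the paper's proof. The problem is (W1). Because your seed ${\mathcal U}_{m}$ is a single small ball, the containment ${\mathcal W}_{m}\supset{\mathcal U}_{m}$ furnished by Lemma \ref{WAFCL} gives you nothing toward ${\mathcal A}_{r_{m}}(1/2,2^{l})\subset\Omega_{m}$, and the two things you substitute for it both have gaps. First, the assertion that the bounded component $D^{-}$ enclosed by $\Sigma_{m}^{-}$ satisfies $\sup_{D^{-}}d_{r_{m}}\leq\max_{\Sigma_{m}^{-}}d_{r_{m}}<1/2$ is not proved and is not a formality: a priori $D^{-}$ could contain a bounded ``pocket'' of the superlevel set $\{d_{r_{m}}>1/2\}$ (a connected component of $\{d_{r_{m}}>1/2\}$ that never meets $\Sigma_{m}^{-}$ and hence sits entirely inside $D^{-}$); nothing in $Ric\geq 0$ or in the local conclusions of Lemma \ref{WAFCL} excludes such pockets, and excluding them is essentially the whole difficulty of passing from the local statement to the global one. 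Second, your argument that ${\mathcal W}_{m}$ is crossectional because $\Sigma_{m}^{\pm}$ have ``$g_{r_{m}}$-areas too large to bound small ball-like regions'' is not a proof: a two-sphere of large area can perfectly well bound a ball, so the area lower bound does not decide between the two isotopy classes of embedded spheres in the ambient $S^{2}\times[0,1]$.

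The paper closes exactly this gap by a different choice of seed, and this is the real content of Proposition \ref{AUXWAF}: for $m$ large one takes ${\mathcal U}={\rm Int}\big(\cup_{j=j_{m}+1}^{j_{m}+l+4}{\mathscr A}_{j}\big)$, a connected open set of definite $g_{r_{m}}$-volume which, by (Q1)--(Q2) and the choice of $j_{m}$, already \emph{contains} ${\mathcal A}_{r_{m}}(1/2,2^{l})$ and is contained in ${\mathcal A}_{r_{m}}(1/32,2^{l+3})$. Feeding this ${\mathcal U}$ into Lemma \ref{WAFCL} (with $a=1/32$, $b=2^{l+3}$), the guaranteed inclusion ${\mathcal W}\supset{\mathcal U}\supset{\mathcal A}_{r_{m}}(1/2,2^{l})$ together with (P3) yields (W1) directly, with no separation argument needed beforehand; (W4) then follows cheaply, since any path from $\partial E$ to infinity must, by the intermediate value theorem applied to $d_{r_{m}}$, pass through ${\mathcal A}_{r_{m}}(1/2,2^{l})\subset\overline{\Omega}_{m}$, and the component of $E\setminus\overline{\Omega}_{m}$ containing $\partial E$ is trapped in the bounded set $\{d_{r_{m}}<1/2\}$. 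In short: you use Proposition \ref{AUXWAF} only as a topological ambient for an isotopy argument, whereas its actual role is to manufacture a seed that swallows the whole metric annulus; without that, your proof of (W1) does not go through.
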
 

\begin{proof}[\bf Proof.] Assume that integers $i\geq 2$ and $l\geq 1$ are given, as well as a divergent sequence $r_{m}\rightarrow \infty$ ($m\geq 1$). According to the Definition \ref{WAFD}, to show weakly asymptotic flatness we need to show the existence of open and connected regions $\Omega_{m}$ for which (W1)-(W3) hold. To define the $\Omega_{m}$ we will rely in the following claim: {\it For any $k\geq 4$ there is $m_{k}>0$ such that for any $m\geq m_{k}$ there is $\tilde{\Omega}_{k,m}$ and $\tilde{\varphi}_{k,m}:\AR(1/2-1/k,2^{l}+1/k)\rightarrow \tilde{\Omega}_{k,m}$ satisfying
\begin{enumerate}[labelindent=\parindent,leftmargin=*,label={\rm (W\arabic*')}]
\item ${\mathcal A}_{r_{m}}(1/2,2^{l})\subset \tilde{\Omega}_{k,m}$ for every $m\geq m_{k}$, and
\item $\tilde{\varphi}_{k,m}^{*}(g_{r_{m}})$ is $1/k$-close in the $C^{i}$-norm to the Euclidean metric, and
\item $d_{r_{m}}\circ \tilde{\varphi}_{k,m}$ is $1/k$-close in the $C^{0}$-norm to the distance function to the origin.
\end{enumerate}
}
\n Once the claim is proved the construction of the regions $\Omega_{m}$ is as follows (assume, redefining $m_{k}$ if necessary, that $m_{k+1}>m_{k}$, for all $k\geq 1$). For $m$ between $m_{1}$ and $m_{2}-1$ let $\Omega_{m}=\tilde{\Omega}_{1,m}$; for $m$ between $m_{2}$ and $m_{3}-1$ let $\Omega_{m}=\tilde{\Omega}_{2,m}$; for $m$ between $m_{3}$ and $m_{4}-1$ let $\Omega_{m}=\tilde{\Omega}_{3,m}$ and so on (for $m$ between $1$ and $m_{1}-1$ define $\Omega_{m}$ as any annulus containing ${\mathcal A}_{r_{m}}(1/2,2^l)$). With this definition of $\Omega_{m}$, (W1')-(W3') imply (W1)-(W3) directly.

We prove now the claim. Because $r_{m}\rightarrow \infty$ and because of (Q2) we can assume without loss of generality that ${\mathcal A}_{r_{m}}(1/2,2^{l})\subset \cup_{j=1}^{j=\infty} {\mathscr A}_{j}$. Then, for every $m$ define $j_{m}$ such that 
\be\label{SOBERANA}
8\, (2^{j_{m}} \hat{r}_{0})< \frac{r_{m}}{2} \leq 8\, (2^{j_{m}+1} \hat{r}_{0}).
\ee
Recalling that $\hat{r}_{j}=2^{j}\hat{r}_{0}$, this says that $8 \hat{r}_{j_{m}}<r_{m}/2$. But then, from (Q1) we obtain that 
\ben
\bigg(\cup_{j=1}^{j=j_{m}} {\mathscr A}_{j}\bigg)\cap {\mathcal A}_{r_{m}}(\frac{1}{2},2^{l})=\emptyset.
\een
Also from (\ref{SOBERANA}) we get $2^{l}r_{m}<2^{j_{m}+l+6}\hat{r}_{0}$, which implies $2^{l}r_{m}<\hat{r}_{j_{m}+l+5}/2$. But then from (Q1) we obtain that
\ben
\bigg(\cup_{j=j_{m}+l+5}^{j=\infty} {\mathscr A}_{j}\bigg)\cap {\mathcal A}_{r_{m}}(\frac{1}{2},2^{l})=\emptyset.
\een 
We conclude that 
\be\label{CONDIFIII}
{\mathcal A}_{r_{m}}(\frac{1}{2},2^{l})\subset \bigg(\cup_{j=j_{m}+1}^{j=j_{m}+l+4} {\mathscr A}_{j}\bigg)\subset {\mathcal A}(\frac{\hat{r}_{j_{m}+1}}{2},8 \hat{r}_{j_{m}+l+4}) \subset {\mathcal A}_{r_{m}}(\frac{1}{32},2^{l+3})
\ee
where the second inclusion is because of (Q1) and the third is because $r_{m}/32\leq \hat{r}_{j_{m}+1}/2$ and $8\hat{r}_{j_{m}+l+4}\leq 2^{l+3}r_{m}$ which are deduced from (\ref{SOBERANA}).

As the end $E$ has cubic volume growth, then $\lim_{r_{m}\rightarrow \infty} {\rm Vol}_{r_{m}}\big({\mathcal A}_{r_{m}}(1/2,2^{l})\big)$ is positive and we can assume that ${\rm Vol}_{r_{m}}\big({\mathcal A}_{r_{m}}(1/2,2^{l})\big)\geq \mu>0$ for all $m$ (see a similar argument in the proof of Prop. \ref{AUXX}). 
Now, for every integer $k\geq 4$, let $r_{0}(V_{0},\varepsilon_{0},a,b,i)$ be the $r_{0}$ provided by Lemma \ref{WAFCL} with the following values of $V_{0},\varepsilon_{0},a$, $b$ and $i$:  $V_{0}=\mu$, $\varepsilon_{0}=1/k$, $a=1/32$, $b=2^{l+3}$ and $i=2$. As we only let $k$ to vary we can denote $r_{0}(V_{0},\varepsilon_{0},a,b,i)=r_{0}(k)$. Then for every $k\geq 4$ define $m_{k}$ such that for every $m\geq m_{k}$ we have $r_{m}\geq r_{0}(k)$.   
Then for every $m\geq m_{k}$ the region ${\mathcal U}:= {\rm Int}\big(\cup_{j=j_{m}+1}^{j=j_{m}+l+4} {\mathscr A}_{j}\big)$ is open by definition, is connected because of (Q2) and verifies (\ref{CONDFI}) by (\ref{CONDIFIII}). We can then apply Lemma \ref{WAFCL} 
and conclude that there is ${\mathcal W}$ and $\varphi: \AR(2a/3,3b/2)\rightarrow {\mathcal W}$ satisfying (P2)-(P3) with, as we are assuming, $\varepsilon_{0}=1/k$.   
With all this at hand define $\Omega_{k,m}=\varphi(\AR(1/2-1/k,2^{l}+1/k))$ for any $m\geq m_{k}$. With this definition (W2') and (W3') follow directly from (P2) and (P3), and (W1') is easily seen to follow from (P3). 
\end{proof}

\bibliographystyle{plain}
\bibliography{Master.bib}

\end{document}